\newtheorem{lemma}{Lemma}
\newtheorem{corollary}{Corollary}
\newcommand{\p}{\partial}
\newcommand{\pdiff}[2]{\frac{\partial {#1}}{\partial {#2}}}
\newcommand{\Lag}{\mathcal{L}}
\newcommand{\alphab}{\boldsymbol{\alpha}}
\newcommand{\ba}{\begin{array}}
\newcommand{\ea}{\end{array}}
\newcommand{\be}{\begin{equation}}
\newcommand{\ee}{\end{equation}}
\newcommand{\bd}{\begin{displaymath}}
\newcommand{\ed}{\end{displaymath}}
\newcommand{\changed}{}
\newcommand{\changethree}{}
\begin{document}
\begin{frontmatter}
\title{Discrete Adjoints for Accurate Numerical Optimization with Application to Quantum Control}
\author[labelNAP]{N. Anders Petersson\corref{cor1}}
\ead{petersson1@llnl.gov}

\author[labelFMG]{Fortino M. Garcia}
\ead{fortino.garcia@colorado.edu}

\author[labelAEC]{Austin E. Copeland}
\ead{acopeland@mail.smu.edu}

\author[labelYJR]{Ylva~L.~Rydin}
\ead{ylva.rydin@it.uu.se}

\author[labelJLD]{Jonathan L. DuBois}
\ead{dubois9@llnl.gov}

\cortext[cor1]{Corresponding author}
\address[labelNAP]{Center for Applied Scientific Computing, LLNL,
  Livermore, CA 94550, USA.}
\address[labelFMG]{Department of Applied Mathematics, CU, Boulder, CO 80309, USA}
\address[labelAEC]{Department of Mathematics, SMU, Dallas, TX 75205, USA}
\address[labelYJR]{Department of Information Technology, UU, 751 05 Uppsala, SWEDEN}
\address[labelJLD]{Quantum Coherent Device Physics Group, LLNL, Livermore, CA
  94550, USA}


\begin{abstract}

  {\changethree This paper considers the optimal control problem for realizing
    logical gates in a closed quantum system. The quantum state is governed by Schr\"odinger's
    equation, which we formulate as a time-dependent Hamiltonian system in terms of the real and
    imaginary parts of the state vector. The system is discretized with the St\"ormer-Verlet scheme,
    which is a symplectic partitioned Runge-Kutta method.  Our main theoretical contribution is the
    derivation of a compatible time-discretization of the adjoint state equation, such that the
    gradient of the discrete objective function can be calculated exactly, at a computational cost
    of solving two Schr\"odinger systems, independently of the number of parameters in the control
    functions.
  
    A parameterization of the control functions based on B-splines with built-in carrier waves is
    also introduced. The carrier waves are used to specify the frequency spectra of the control
    functions, while the B-spline functions specify their envelope and phase. This approach allows
    the number of control parameters to be independent of, and significantly smaller than, the
    number of time steps for integrating Schr\"odinger's equation.

    We consider Hamiltonians that model the dynamics of a superconducting multi-level qudit and
    present numerical examples of how the proposed technique can be combined with the interior point
    L-BFGS algorithm from the IPOPT package for realizing quantum gates.
    In a set of test cases, the proposed algorithm is shown to compare favorably with
    QuTiP/pulse\_optim and Grape-Tensorflow.

}
\end{abstract}


\begin{keyword}
  Optimal control \sep Partitioned Runge-Kutta method \sep Discrete adjoint \sep Quantum computing



\end{keyword}

\end{frontmatter}


\section{Introduction}


A key challenge for realizing the potential of quantum computing lies in determining the most
efficient and accurate route to controlling the quantum states in a quantum device. This challenge
stems from the fact that current quantum computing systems, unlike classical computers, do not have
a fixed set of logical gates predetermined in hardware. Instead, the execution of a quantum
algorithm is carried out by first devising a set of classical control functions that are then
applied to the quantum computing hardware to guide the quantum states through a series of quantum
logical operations~\cite{Nielsen-Chuang}. Reducing the time required for a quantum gate to be
realized is critical for near-term quantum computing because it enables the computation to finish
before the quantum state collapses to a classical state, rendering the results meaningless. To
mitigate this problem, quantum optimal control techniques have been developed to produce customized
control pulses that minimize the execution time for complicated gates that directly map onto a
physical system~\cite{Shi_2019}.

{\changed Optimizing the control functions for realizing quantum gates is a optimal control problem
  where the objective function measures the infidelity of the gate transformation, constrained by
  Schr\"odinger's equation governing the evolution of the quantum states. For superconducting
  circuits it is also important to suppress leakage into highly energetic states~\cite{Leung-2017},
  leading to an optimal control problem in Mayer-Lagrange form.}  {\changethree Our approach builds
  upon the works of Hager~\cite{Hager2000}, Sanz-Serna~\cite{sanz2016symplectic} and
  Ober-Bl\"obaum~\cite{ober2008discrete}. Hager~\cite{Hager2000} first showed how the Hamiltonian
  structure in an optimization problem can be utilized to calculate the gradient of the objective
  function. Hager considered the case in which the state equation is discretized by one Runge-Kutta
  scheme, with the adjoint state equation discretized by another Runge-Kutta scheme. It was found
  that the discrete gradient can be calculated exactly if the pair of Runge-Kutta methods satisfy
  the requirements of a symplectic partitioned Runge-Kutta method. Further details and
  generalizations are described in the review paper by
  Sanz-Serna~\cite{sanz2016symplectic}. Ober-Bl\"obaum~\cite{ober2008discrete} extended Hager's
  approach to the case where the state equation itself is a Hamiltonian system that is discretized
  by a partitioned Runge-Kutta scheme. For autonomous state equations, it was shown that the
  compatible discretization of the adjoint state equation is another partitioned Runge-Kutta scheme.

In the quantum optimal control problem, the Schr\"odinger (state) equation is a time-dependent
Hamiltonian system. To ensure long-time numerical accuracy it is appropriate to discretize it using
a symplectic time-integration method~\cite{HairerLubichWanner-06}. For this purpose we use the
St\"ormer-Verlet method, which can be written as a partitioned Runge-Kutta scheme, based on the
trapezoidal and implicit midpoint rules. Our main theoretical contribution is the generalization of
Ober-Bl\"obaum's~\cite{ober2008discrete} work to the case of a time-dependent Hamiltonian system. We
show that the compatible method for the adjoint state equation resembles a partitioned Runge-Kutta
scheme, except that the time-dependent matrices must be evaluated at modified time levels.  }


Logical gates in a closed quantum system can be viewed as linear reversible mappings,
$|\bm{\psi''}\rangle = V_g | \bm{\psi'}\rangle$, from an initial state $|\bm{\psi'}\rangle$ to a
final state $|\bm{\psi''}\rangle$, where the reversibility implies that the mapping $V_g$ must be
unitary, $V_g^\dag V_g = I$.  To introduce the quantum control problem, we start by discussing the
case where the unitary transformation is defined in the entire $N$-dimensional state space, such
that it can be represented by a unitary matrix $V_g\in\mathbb{C}^{N\times N}$; a more general case
is described in Section~\ref{sec:generalized}.

In the following, we will replace the ket-notation~\cite{Nielsen-Chuang} of the state vector
$|\bm{\psi}\rangle = \psi^{(0)}|0\rangle + \psi^{(1)}|1\rangle + \ldots + \psi^{(N-1)}|N-1\rangle$
by the vector notation $\bm{\psi} = \psi^{(0)}\bm{e}_0 + \psi^{(1)} \bm{e}_1 + \ldots +
\psi^{(N-1)}\bm{e}_{N-1}$, which is more common in the computational mathematics
literature\footnote{Here, $\bm{e}_j$ represents the $j$th canonical basis vector, in which the $j$th
  element is one and all other elements are zero.}. The elements in the state vector are complex
probability amplitudes and the squared magnitude of the amplitudes sum to unity, i.e.,
$\|\bm{\psi}\|_2^2 = 1$.

To account for all admissible initial data in the Hilbert space $\mathbb{C}^{N}$, we consider the
evolutions from the canonical basis vectors $\bm{e}_j$, for $j=0,1,\ldots,N-1$. The time-dependent
control functions are expanded in terms of a finite number of basis functions, such that the control
functions are determined by the finite-dimensional parameter vector $\bm{\alpha}\in{\mathbb
  R}^D$. This leads to Schr\"odinger's equation in matrix form for the $N\times N$ complex-valued
solution operator matrix $U(t,\bm{\alpha})$:
\begin{equation}\label{eq:schrodinger_matrix}
\frac{dU}{dt} + i H(t,\bm{\alpha}) U = 0, \quad 0 \leq t \leq T, \quad U(0,\bm{\alpha}) = I_N,\quad H^\dag = H.
\end{equation}
Here, $I_N$ is the $N\times N$ identity matrix and $H(t,\bm{\alpha})$ is the Hamiltonian matrix, in
which the time-dependence is parameterized by $\bm{\alpha}$. As a result, the solution operator
matrix depends implicitly on $\bm{\alpha}$ through Schr\"odinger's equation. Due to linearity, the
solution for general initial conditions satisfies $\bm{\psi}(t,\bm{\alpha}) =
U(t,\bm{\alpha})\bm{\psi}(0,\bm{\alpha})$.

The goal of the quantum control problem is to determine the parameter vector $\bm{\alpha}$ such that
the time-dependence in the Hamiltonian matrix leads to a solution of Schr\"odinger's equation that
minimizes the difference between the target gate matrix $V_g$ and $U(T,\bm{\alpha})$. {\changed Here,
  we measure the difference by the commonly used target gate infidelity~\cite{qutip, Leung-2017,
    Lucarelli-2018, Machnes-2018, Shi_2019}, }
\begin{equation}\label{eq:objf}
  {\cal J}_0(U_T(\bm{\alpha})) = 1 - \frac{1}{N^2} \left| \mbox{Tr}\left(
   U^\dag_T(\bm{\alpha}) V_g\right) \right|^2,
      \quad U_T(\bm{\alpha}) := U(T,\bm{\alpha}).
\end{equation}
{\changed Because $U_T$ and $V_g$ are unitary, $|\mbox{Tr}(U_T^\dagger V_g)|\leq N$ and ${\cal J}_0
  \geq 0$. Note that the target gate infidelity is sensitive to relative phase differences between
  the columns of $U_T$ and $V_g$, but is invariant to global phase differences between $U_T$ and
  $V_g$.}



The quantum control problem is a constrained optimization problem where, in the basic setting, the
gate infidelity \eqref{eq:objf} is minimized under the constraints that the solution operator matrix
satisfies Schr\"odinger's equation \eqref{eq:schrodinger_matrix} and the amplitudes of the control
functions (determined by the parameter vector $\bm{\alpha}$) do not exceed prescribed limits. For a
discussion of the solvability of the quantum control problem, see for example Borzi et
al.~\cite{Borzi-17}.

While not a restriction of our approach, we exemplify our technique on Hamiltonians that
model the dynamics of a superconducting qudit (a qubit with more than two energy levels). We
represent the state vector in the energy basis in which the system Hamiltonian matrix is diagonal.
{\changed In the laboratory frame of reference, the Hamiltonian matrix is modeled by
\begin{equation}\label{eq_quantum-osc}
H_{lab}(t,\alphab) = \omega_a a^\dag a - \frac{\xi_a}{2} a^\dag a^\dag a a + f(t,\alphab)(a + a^\dagger).
\end{equation}
Here, $a$ and $a^\dagger$ are the lowering and raising matrices (see \ref{app_RotatingFrame}),
$\omega_a>0$ is the fundamental resonance frequency, $\xi_a>0$ is the self-Kerr coefficient and
$f(t,\bm{\alpha})$ is a real-valued control function that depends on the parameter vector
$\bm{\alpha}$.

To slow down the time scales in the state vector, the problem is transformed to a rotating frame of
reference (see~\ref{app_RotatingFrame}) in which the Hamiltonian matrix satisfies
\begin{align}
  H(t,\bm{\alpha}) = - \frac{\xi_a}{2} a^\dag a^\dag a a + p(t,\bm{\alpha}) ( a + a^\dag ) + i
  q(t,\bm{\alpha}) ( a - a^\dag ),\label{eq:hamiltonian-total}
\end{align}
where $p(t,\bm{\alpha})$ and $q(t,\bm{\alpha})$ are the real-valued control functions in the rotating frame of
reference. The control functions in the two frames are related by
\begin{align}\label{eq_rot-ansatz}
f(t) = 2p(t) \cos(\omega_a t) - 2q(t) \sin(\omega_a t).
\end{align}
}

Several numerical methods for the quantum control problem are based on the GRAPE
algorithm~\cite{Khaneja-2005}. In this case, Schr\"odinger's equation is discretized in time using
the second order accurate Magnus scheme~\cite{HairerLubichWanner-06}, in which the Hamiltonian
matrix is evaluated at the midpoint of each time step. A stair-step approximation of the control
functions is imposed such that each control function is constant within each time step. Thus, the
time step determines both the numerical accuracy of the dynamics of the quantum state {\em and} the
number of control parameters. With $Q$ control functions, $M$ time steps of size $h$, the control
functions are thus described by $M$ times $Q$ parameters $\alpha_{j,k}$. The propagator in the
Magnus method during the $j^{th}$ time step is of the form $\mbox{exp}(-ih(H_0 + \sum_k\alpha_{k,j}
H_k))$. In general, the matrices $H_0$ and $H_k$ do not commute, leading to an integral expression
for the derivative of the propagator with respect to the parameters, which is needed for computing
the gradient of the objective function. In the {\changed original} GRAPE method, this integral
expression is approximated by the first term in its Taylor series expansion, leading to an
approximate gradient that is polluted by an ${\cal O}(h^2)$ error. As the gradient becomes smaller
during the optimization, the approximation error will eventually dominate the numerical gradient,
which may hamper the convergence of the optimization algorithm. A more accurate way of numerically
evaluating the derivative of the time-step propagator can be obtained by retaining more terms in the
Taylor series expansion, or by using a matrix commutator expansion~\cite{de_Fouquieres_2011}.
{\changed More recently, the GRAPE algorithm has been generalized to optimize objective functions
  that include a combination of the target gate infidelity, integrals penalizing occupation of
  ``forbidden states'' (see Section~\ref{sec:generalized}) and terms for imposing smoothness and
  amplitude constraints on the control functions. Here, automatic differentiation is used for
  computing the gradient of the objective function~\cite{Leung-2017}. However, the number of control
  parameters is still proportional to the number of time steps, which may become very large when the
  duration of the gate is long, or the quantum state is highly oscillatory.}

{\changed As an alternative to calculating the gradient of the objective function by solving
  an adjoint equation backwards in time, the gradient can be calculated by differentiating
  Schr\"odinger's equation with respect to each parameter in the control function, leading to a
  differential equation for each component of the gradient of the state vector.} This approach,
implemented in the GOAT algorithm~\cite{Machnes-2018}, allows the gradient of the objective function
to be calculated exactly, but requires $(D+1)$ Schr\"odinger systems to be solved when the control
functions depend on $D$ parameters. This makes the method computationally expensive when the number
of parameters is large.

Using the stair-stepped approximation of the control functions often leads to a large number
{\changed of control} parameters, which may hamper the convergence of the GRAPE algorithm. The total
number of parameters can be reduced by instead expanding the control functions in terms of basis
functions.
By using the chain rule, the gradient from the GRAPE algorithm can then be used to calculate the
gradient with respect to the coefficients in the basis function expansion. This approach is
implemented in the GRAFS algorithm~\cite{Lucarelli-2018}, where the control functions are expanded
in terms of Slepian sequences.

Gradient-free optimization methods can also be applied to quantum optimal control
problems. These methods do not rely on the gradient to be evaluated and are therefore significantly
easier to implement. However, the convergence of these methods is usually much slower than for
gradient-based techniques, unless the number of control parameters is very small. One example of a
gradient-free methods for quantum optimal control is the CRAB algorithm~\cite{Caneva-2011}.


Many parameterizations of quantum control functions have been proposed in the literature, for
example cubic splines~\cite{Ewing-1990}, Gaussian pulse cascades~\cite{Emsley-1989}, Fourier
expansions~\cite{Zax-1988} and Slepian sequences~\cite{Lucarelli-2018}.
{\changed This paper presents a different approach, based on parameterizing the control functions by
  B-spline basis functions with carrier waves, see Figure~\ref{fig_bspline}. Our approach relies on
  the observation that transitions between the energy levels in a quantum system are triggered by
  resonance, at frequencies which often can be determined by inspection of the system
  Hamiltonian. The carrier waves are used to specify the frequency spectra of the control functions,
  while the B-spline functions specify their envelope and phase. We find that this approach allows
  the number of control parameters to be independent of, and significantly smaller than, the number of
  time steps for integrating Schr\"odinger's equation.}
\begin{figure}
    \centering
    \includegraphics[width=0.6\linewidth]{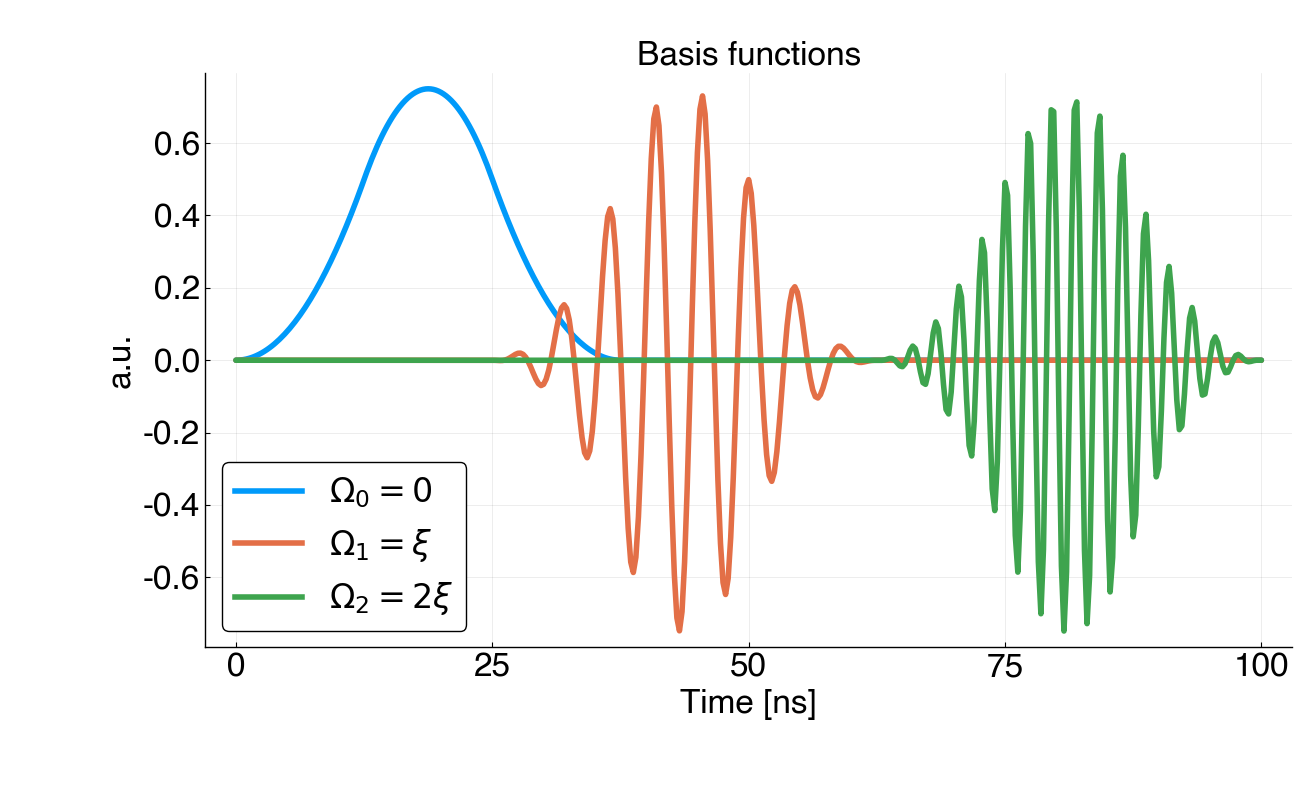}
    \caption{An example of three quadratic B-spline basis functions with carrier wave frequencies $(0,
      \xi, 2\xi)$.}  \label{fig_bspline} 
\end{figure}




The remainder of the paper is organized as follows. In Section~\ref{sec:generalized}, we generalize
the optimization problem to the case of target gates that are defined in a subspace of the entire
state space. In Section~\ref{sec_real}, we first introduce the real-valued formulation of
Schr\"odinger's equation, followed by a presentation of the symplectic St\"ormer-Verlet
time-stepping method, written as a partitioned Runge-Kutta scheme.
To achieve an exact gradient of the discrete objective function, in Section~\ref{sec_disc} we derive
the discrete adjoint time integration method.  {\changed This method resembles a partitioned
  Runge-Kutta scheme, except that the time-dependent matrices are evaluated at modified
  time-levels.}  The solution of the discrete adjoint equation is used to efficiently calculate all
components of the gradient of the discrete objective function. The parameterization of the control
functions using B-splines with carrier waves is presented in
Section~\ref{sec:B-Splines}. Section~\ref{sec_numopt} presents a numerical
example of how the proposed technique can be combined with the interior point L-BFGS
algorithm~\cite{Nocedal-Wright} from the IPOPT package~\cite{Wachter2006} to realize multi-level
qudit gates. Important properties of the optima are exposed by analyzing the eigenvalues of the
Hessian.  {\changed The proposed algorithm has been implemented in the JuQBox package, written in
  the Julia~\cite{julia} programming language. In Section~\ref{sec_compare}, we compare its
  performance to two variants of the GRAPE algorithm.} Concluding remarks are given in
Section~\ref{sec_conc}.



\section{Generalized gates} \label{sec:generalized}

In quantum computing applications it is common to define gate transformations in a subspace of the
entire (possibly infinite dimensional) state space, in which the evolution of higher energy states
is not relevant for the gate transformation, but if left uncontrolled, may lead to leakage of
probability.
In the following, let the subspace of interest contain $E>0$ ``essential'' states and let $G = N - E
\geq 0$ denote the number of ``guard'' states. The guard states that correspond to the highest
energy levels in the model are often called ``forbidden'' states~\cite{Leung-2017}.

In the case of one qudit oscillator, we can always order the elements in the state vector such that
they correspond to increasing energy levels.
The Schr\"odinger equation governs the evolution of all energy levels in the state vector, including
the guard levels, but the unitary gate transformation is only defined in the subspace of the
essential states. {\changed This requirement leads us to define the target gate transformation
  matrix according to
\begin{align}\label{eq_target-partitioned}
    V = \begin{bmatrix}
      V_g \\
      \bm{0}
    \end{bmatrix} \in \mathbb{C}^{N\times E},\quad V_g \in \mathbb{C}^{E\times E},\quad V_g^\dagger V_g =I_E.
\end{align}
}

Let the state vector $\bm{\psi}_j(t,\bm{\alpha})\in {\mathbb C}^N$ satisfy the Schr\"odinger equation,
\begin{equation}\label{eq:schrodinger_example}
  \frac{d \bm{\psi}_j}{dt} + iH(t,\bm{\alpha})\bm{\psi}_j = 0,\quad 0\leq t\leq T, \quad
  \bm{\psi}_j (0,\bm{\alpha}) = \bm{e}_{j},
\end{equation}
for $j=0,1,\ldots,E-1$.
The solution operator matrix $U(t,\bm{\alpha})$ and the target gate matrix $V$ are rectangular with $N$ rows and $E$ columns,
\begin{equation}\label{eq_solution-matrix}
U(t,\bm{\alpha}) = [ \bm{\psi}_{0}(t,\bm{\alpha}), \bm{\psi}_{1}(t,\bm{\alpha}), \ldots, \bm{\psi}_{E-1}(t,\bm{\alpha}) ],\quad
V = [ \bm{d}_{0}, \bm{d}_{1}, \ldots , \bm{d}_{E-1}].
\end{equation}
{\changed The decomposition \eqref{eq_target-partitioned} implies that the last $G$ rows of $\bm{d}_j$ must be zero. }
%

The matrix overlap function $R_V(U_T)$ in \eqref{eq:objf} generalizes in a straightforward way to unitary gates that are defined in the subspace, resulting in the target gate infidelity function
\begin{equation}\label{eq:objf-essential}
  {\cal J}_1(U_T(\bm{\alpha})) = 1 - \frac{1}{E^2} \left| S_V(U_T(\bm{\alpha})) \right|^2, \quad
  S_V(U_T(\bm{\alpha})) = \sum_{j=0}^{E-1} \left\langle \bm{\psi}_j(T,\bm{\alpha}), \bm{d}_j \right\rangle_2,
\end{equation}
where $\langle \cdot, \cdot \rangle_2$ is the $\ell_2$ vector scalar product. The population of the
guard states can be measured by the objective function
\begin{equation}\label{eq:objf-guard}
  {\cal J}_2(U(\cdot,\bm{\alpha})) = \frac{1}{T} \int_0^T \sum_{j=0}^{E-1}
  \left\langle \bm{\psi}_j(t,\bm{\alpha}), W\bm{\psi}_j(t,\bm{\alpha}) \right\rangle_2 \, dt.
\end{equation}
Here, $W$ is a diagonal $N\times N$ positive semi-definite weight matrix. The elements in $W$ are
zero for all essential states and are positive for the guard states. The elements of $W$ are typically
larger for higher energy levels in the model.

For the quantum control problem with guard states, we formulate the optimization problem as
\begin{gather}
  \mbox{min}_{\bm{\alpha}}\,  {\cal G}(\bm{\alpha}) := {\cal
    J}_1(U_T(\bm{\alpha})) + {\cal J}_2(U(\cdot,\bm{\alpha})), \label{eq:objf-total}\\ 
  \frac{dU}{dt} + iH(t, \bm{\alpha}) U = 0,\quad 0\leq t\leq T, \quad
  U(0,\bm{\alpha}) = [\bm{e}_{0}, \bm{e}_{1}, \ldots,
    \bm{e}_{E-1}].\label{eq:schrodinger-matrix-essential} \\
  \alpha_{min} \leq \alpha_q \leq \alpha_{max},\quad q=1,2,\ldots,D.\label{eq_ineq-constraints2}
\end{gather}
In the special case of zero guard states, ${\cal J}_2(U)=0$ because
$W=0$. Thus, the above formulation applies to both the cases with and without guard states,
i.e., when $G=N-E\geq 0$.

\section{Real-valued formulation}\label{sec_real}

%
A real-valued formulation of Schr\"odinger's equation \eqref{eq:schrodinger_example} is given by
\begin{equation}\label{eq_real-schrodinger}
  \begin{bmatrix}
    \dot{\bm{u}}\\ \dot{\bm{v}}
  \end{bmatrix} =
  \begin{bmatrix}
    S(t) & -K(t) \\ K(t) & S(t)
  \end{bmatrix}
  \begin{bmatrix} \bm{u}\\ \bm{v} \end{bmatrix} =:
  %
%
  \begin{bmatrix}
    f^u(\bm{u},\bm{v},t)\\
    f^v(\bm{u},\bm{v},t)
  \end{bmatrix},\quad
  \begin{bmatrix}
    \bm{u}(0)\\
    \bm{v}(0)
  \end{bmatrix}
  =
  \begin{bmatrix}
    \bm{g}^u\\
    \bm{g}^v
  \end{bmatrix},
\end{equation}
where,
\[
  \bm{u} = \mbox{Re}(\bm{\psi}),\quad \bm{v} = -\mbox{Im}(\bm{\psi}),\quad 
  K = \mbox{Re}\,(H),\quad S = \mbox{Im}\,(H),
\]
%
Because the matrix $H$ is Hermitian, $K^T=K$ and $S^T=-S$ (note that the matrix $S$ is unrelated to
the matrix overlap function $S_V$). The real-valued formulation of Schr\"odinger's equation is a
time-dependent Hamiltonian system corresponding to the Hamiltonian functional,
\begin{equation}\label{eq_hamiltonian}
{\cal H}(\bm{u},\bm{v},t) = \bm{u}^T S(t) \bm{v} + \frac{1}{2} \bm{u}^T K(t)\bm{u} + \frac{1}{2}
\bm{v}^T K(t) \bm{v}.
\end{equation}
In general, $S(t)\ne 0$, which makes the Hamiltonian system non-separable.

In terms of the real-valued formulation, the columns of the solution operator matrix in
\eqref{eq_solution-matrix} satisfy $U = \left[ \bm{u}_1 -i\bm{v}_1,\ \bm{u}_2 -i\bm{v}_2,\ \ldots,
  \bm{u}_E -i\bm{v}_E \right]$. Here, $(\bm{u}_j, \bm{v}_j)$ satisfy \eqref{eq_real-schrodinger}
subject to the initial conditions $\bm{g}^u_j = \bm{e}_{j}$ and $\bm{g}^v_j = \bm{0}$.  The columns
in the target gate matrix $V$ correspond to
\[
V = \left[ \bm{d}_1^u - i
  \bm{d}_1^v,\  \bm{d}_2^u - i \bm{d}_2^v,\ \ldots,  \bm{d}_E^u - i \bm{d}_E^v\right],\quad 
\bm{d}_j^u = \mbox{Re}(\bm{d}_j),\quad \bm{d}_j^v = -\mbox{Im}(\bm{d}_j).
\]
Using the real-valued notation, the objective function \eqref{eq:objf-total} can be written
\begin{multline}\label{eq_objf-total-real}
  {\cal G}(\bm{\alpha}) =
  \left(1 - \frac{1}{E^2} \left| S_V( U_T(\bm{\alpha}) ) \right| ^2
  \right) \\
  + \frac{1}{T} \sum_{j=0}^{E-1} \int_{0}^T \Bigl( \langle \bm{u}_j(t,\bm{\alpha}), W \bm{u}_j(t,\bm{\alpha}) \rangle_2 +
  \langle \bm{v}_j(t,\bm{\alpha}), W \bm{v}_j(t,\bm{\alpha}) \rangle_2\Bigr) \, dt,
\end{multline}
where,
\begin{multline}\label{eq_st-real}
  S_V( U_T ) =
  \sum_{j=0}^{E-1}\left(  \left\langle \bm{u}_j(T,\bm{\alpha}), \bm{d}^u_j \right\rangle_2
  +  \left\langle \bm{v}_j(T,\bm{\alpha}), \bm{d}^v_j \right\rangle_2 \right)\\
  + i\sum_{j=0}^{E-1} \left( \left\langle \bm{v}_j(T,\bm{\alpha}), \bm{d}^u_j \right\rangle_2
  - \left\langle \bm{u}_j(T,\bm{\alpha}), \bm{d}^v_j \right\rangle_2 \right).
\end{multline}

\subsection{Time integration}\label{sec_time-stepping}

Let $t_n= nh$, for $n=0,1,\ldots,M$, be a uniform grid in time where $h=T/M$ is the time step. Also
let $\bm{u}^n\approx \bm{u}(t_n)$ and $\bm{v}^n\approx \bm{v}(t_n)$ denote the numerical solution on
the grid.
We use a partitioned Runge-Kutta (PRK) scheme~\cite{HairerLubichWanner-06}
to discretize the real-valued formulation of Schr\"odinger's equation,
\begin{alignat}{3}
  \bm{u}^0 &= \bm{g}^u,\quad &
  \bm{v}^0 &= \bm{g}^v,\\
  \bm{u}^{n+1} &= \bm{u}^n + h \sum_{i=1}^s b^u_i \bm{\kappa}^{n,i},\quad &
  \bm{v}^{n+1} &= \bm{v}^n + h \sum_{i=1}^s b^v_i \bm{\ell}^{n,i},\label{eq_uv-update}\\
  \bm{\kappa}^{n,i} &= f^u(\bm{U}^{n,i}, \bm{V}^{n,i}, t_n + c^u_i h),\quad &
  \bm{\ell}^{n,i} &= f^v(\bm{U}^{n,i}, \bm{V}^{n,i}, t_n + c^v_i h),\label{eq_uv-slopes}\\
  \bm{U}^{n,i} &= \bm{u}^n + h \sum_{j=1}^s a^u_{ij} \bm{\kappa}^{n,j},\quad &
  \bm{V}^{n,i} &= \bm{v}^n + h \sum_{j=1}^s a^v_{ij}\bm{\ell}^{n,j}. \label{eq_stage-values}
\end{alignat}
Here, $s\geq 1$ is the number of stages. The stage variables $\bm{U}^{n,i}$ and
$\bm{V}^{n,i}$ are set in a bold font to indicate that they are unrelated to the solution operator
matrix $U(t,\bm{\alpha})$ and the target gate matrix $V$.

The St\"ormer-Verlet scheme is a two-stage PRK method ($s=2$) that is symplectic, time-reversible and
second order accurate~\cite{HairerLubichWanner-06}. It combines the trapezoidal and the implicit midpoint rules, with Butcher coefficients:
%
%
\begin{align}
    a^u_{11} &= a^u_{12} = 0,\quad a^u_{21} = a^u_{22} = \frac{1}{2},\qquad&   
    a^v_{11} &= a^v_{21} = \frac{1}{2},\quad a^v_{12} = a^v_{22} = 0,\\
    b^u_{1} &= b^u_2 = \frac{1}{2},\quad  c^u_1 = 0,\quad c^u_2=1,\qquad&   
    b^v_{1} &= b^v_2 = \frac{1}{2},\quad  c^v_1 = c^v_2=\frac{1}{2}.
\end{align}


\subsection{Time step restrictions for accuracy and stability}\label{sec_time-step-est}

The accuracy in the numerical solution of Schr\"odinger's equation is essentially determined by how well the fastest time scale in the state vector is resolved on the grid in time. The analysis of the time scales in the solution of Schr\"odinger's equation is most straightforward to perform in the complex-valued formulation \eqref{eq:schrodinger_example}. 

{\changed There are two fundamental time scales that must be resolved in the solution of Schr\"odinger's equation. The first corresponds to how quickly the control functions must vary in time to trigger the desired transitions between the energy levels in the quantum system. This time scale is determined by the transition frequencies in the system Hamiltonian, which follow as the difference between its consecutive eigenvalues. In the Hamiltonian model~\eqref{eq:hamiltonian-total}, the angular transition frequencies between the essential energy levels are
\begin{align}
    \Delta_{j} = j\xi_a,\quad j=0,\ldots,E-1.
\end{align}

The second time scale is due to the harmonic oscillation of the phase in the state vector. It} can be estimated by freezing the time-dependent coefficients in the Hamiltonian matrix at some time $t=t_*$ and considering Schr\"odinger's equation with the time-independent Hamiltonian matrix $H_*=H(t_*)$. The $N\times N$ matrix $H_*$ is Hermitian and can be diagonalized by a unitary transformation,
\[
H_* X = X \Gamma,\quad X^\dag X = I_N,\quad \Gamma = \mbox{diag$(\gamma_1, \gamma_2, \ldots, \gamma_N)$},
\]
where the eigenvalues $\gamma_k$ are real. By the change of variables $\widetilde{\bm{\psi}} = X^\dag \bm{\psi}$, the solution of the diagonalized system follows as
\[
\widetilde{\psi}_k(t) = e^{-i\gamma_k t} \widetilde{\psi}_k(0),
\]
corresponding to the period $\tau_k = 2\pi/|\gamma_k|$. {\changed The shortest period thus follows
  from the spectral radius of $H_*$, $\rho(H_*) = \max_k |\gamma_k|$.}

{\changed To estimate the time step for the St\"ormer-Verlet method, we require that} the shortest
period in the solution of Schr\"odinger's equation must be resolved by at least $C_P$ time
steps. {\changed Taking both time scales into account} leads to the time step restriction
\begin{align}\label{eq_timestep}
h \leq 
\frac{2\pi}{C_P \max\{\rho(H_*), \max_{j}(|\Delta_j|)\}}.
\end{align}
The value of $C_P$ that is needed to obtain a given accuracy in the numerical solution depends on
the order of accuracy, the duration of the time integration, as well as the details of the
time-stepping scheme. For second order accurate methods such as the St\"ormer-Verlet method,
acceptable accuracy for engineering applications can often achieved with $C_P\approx 40$. With the
St\"ormer-Verlet method, we note that the time-stepping can become unstable if $C_P\leq 2$,
corresponding to a sampling rate below the Nyquist limit.

{\changed After freezing the coefficients, the Hamiltonian~\eqref{eq:hamiltonian-total} becomes
\begin{align*}
H_* = -\frac{\xi_a}{2}a^\dagger a^\dagger a a + p_*(a + a^\dag) + i q_* (a - a^\dag),\quad 
p_* = p(t_*,\bm{\alpha}),\quad 
q_* = q(t_*,\bm{\alpha}).
\end{align*}
We can estimate the spectral radius of $H_*\in\mathbb{C}^{N\times N}$ using the Gershgorin circle
theorem~\cite{Golub-VanLoan}.
Because $H_*$ is Hermitian, all its eigenvalues are real. As a result, its spectral radius can be
bounded by
\begin{align*}
    \rho(H_*)
    \leq \frac{|\xi_a|}{2}(N-1)(N-2) + (|p_*| + |q_*|)\sqrt{N-1}.
\end{align*}
Hence, it is the largest value of $(|p_*| + |q_*|)$ that determines the time step. 

Given the parameter vector $\bm{\alpha}$, the control functions are bounded by $p_\infty = \max_t |p(t,\alphab)|$ and $q_\infty = \max_t |q(t,\alphab)|$, where the maximum is evaluated for times $0\leq t \leq T$. Thus, using the estimate
\begin{align}\label{eq_spec-rad}
    \rho(H_*) 
    \leq \frac{|\xi_a|}{2}(N-1)(N-2) + (p_\infty + q_\infty)\sqrt{N-1},
\end{align}
in \eqref{eq_timestep} guarantees that the time-dependent phase in the state vector is resolved by at at least $C_P$ time steps per shortest period.
}

If the optimization imposes amplitude constraints on the parameter vector, $|\bm{\alpha}|_\infty \leq \alpha_{max}$, those constraints can be used to estimate the time step before the optimization starts. This allows the same time step to be used throughout the iteration and eliminates the need to recalculate the spectral radius of $H_*$ when $\bm{\alpha}$ changes.

{\changed Our implementation of the St\"ormer-Verlet scheme was verified to be second order accurate. It was also found to give approximately the same accuracy as the second order Magnus integrator~\cite{HairerLubichWanner-06} when the same time step was used in both methods (data not shown to conserve space).}

\section{Discretizing the objective function and its gradient}\label{sec_disc}

In this section, we develop a ``discretize before optimize'' approach in which we first discretize the objective function and then derive a compatible scheme for discretizing the adjoint state equation, which is used for computing the gradient of the objective function. 
{\changethree 
As was outlined in the introduction, our approach builds upon the works of Hager~\cite{Hager2000}, Sanz-Serna~\cite{sanz2016symplectic} and Ober-Bl\"obaum~\cite{ober2008discrete}.}

\subsection{Discretizing the objective function}\label{sec_discrete-obj}

The St\"ormer-Verlet scheme can be written in terms of the stage variables
$(\bm{U}^{n,i},\bm{V}^{n,i})$ by substituting $(\bm{\kappa}^{n,i}, \bm{\ell}^{n,i})$ from
\eqref{eq_uv-slopes} into \eqref{eq_uv-update},
\begin{align}
  \bm{u}^0 &= \bm{g}^u ,\quad \bm{v}^0 = \bm{g}^v,\label{eq_initial-cond}\\ \bm{u}^{n+1} &= \bm{u}^n
  + \frac{h}{2}\left(S_n\bm{U}^{n,1} + S_{n+1}\bm{U}^{n,2} - K_n\bm{V}^{n,1} - K_{n+1}\bm{V}^{n,2}
  \right),\label{eq_u-update}\\
  \bm{v}^{n+1} &= \bm{v}^n + \frac{h}{2}\left(K_{n+1/2} \left(\bm{U}^{n,1} + \bm{U}^{n,2}\right)
  + S_{n+1/2} (\bm{V}^{n,1} + \bm{V}^{n,2}) \right),\label{eq_v-update}
\end{align}
and into \eqref{eq_stage-values},
\begin{align} 
  \bm{U}^{n,1} &= \bm{u}^n,\label{eq_U1}\\ \bm{U}^{n,2} &= \bm{u}^n +
  \frac{h}{2}\left(S_n\bm{U}^{n,1} + S_{n+1}\bm{U}^{n,2} - K_n\bm{V}^{n,1} - K_{n+1}\bm{V}^{n,2}
  \right),\label{eq_U2}\\ \bm{V}^{n,1} &= \bm{v}^n + \frac{h}{2}\left( K_{n+1/2} \bm{U}^{n,1} +
  S_{n+1/2} \bm{V}^{n,1} \right),\label{eq_V1}\\ \bm{V}^{n,2} &= \bm{v}^n + \frac{h}{2}\left(
  K_{n+1/2} \bm{U}^{n,1} + S_{n+1/2} \bm{V}^{n,1} \right).\label{eq_V2}
\end{align}
Here, $S_n = S(t_{n})$, $S_{n+1/2} = S(t_n + 0.5 h)$, etc. Because $S(t)\ne0$, the
scheme is block implicit.
Note that $\bm{u}^{n+1} = \bm{U}^{n,2}$ and $\bm{V}^{n,1} = \bm{V}^{n,2} = \bm{v}(t_{n+1/2}) + {\cal O}(h^2)$.

The numerical solution at the final time step provides a second order accurate approximation of the
continuous solution operator matrix $U_T$, which we denote $U_{Th}$. It is used to approximate the
matrix overlap function $S_V(U_T)$ in \eqref{eq_st-real},
\begin{equation}
  S_{Vh}(U_{Th}) =
  \sum_{j=0}^{E-1}\left( \left\langle \bm{u}_j^M, \bm{d}^u_j \right\rangle_2
  +  \left\langle \bm{v}_j^M, \bm{d}^v_j \right\rangle_2 \right) 
  + i\sum_{j=0}^{E-1} \left( \left\langle \bm{v}_j^M, \bm{d}^u_j \right\rangle_2
  - \left\langle \bm{u}_j^M, \bm{d}^v_j \right\rangle_2 \right),
\end{equation}
which is then used as the first part of the discrete objective function,
\begin{equation}\label{eq:cost_infidelity}
  {\cal J}_{1h}(U_{Th}) = \left(1 - \frac{1}{E^2} | S_{Vh}(U_{Th}) |^2 \right).
\end{equation}

The integral in the objective function \eqref{eq_objf-total-real} can be discretized to second order
accuracy by using the Runge-Kutta stage variables,
\begin{multline}\label{eq:cost_guardLevel}
  {\cal J}_{2h}(\bm{U}, \bm{V}) = \frac{h}{T}\sum_{j=0}^{E-1}
    \sum_{n=0}^{M-1}
    \left(  \frac{1}{2}\left \langle \bm{U}^{n,1}_j, W \bm{U}^{n,1}_j \right \rangle_2
    +  \frac{1}{2} \left \langle \bm{U}^{n,2}_j, W \bm{U}^{n,2}_j \right \rangle_2 \right.\\
    + \left. \left \langle \bm{V}^{n,1}_j, W \bm{V}^{n,1}_j \right \rangle_2
    \right).
\end{multline}
Based on the above formulas we discretize the objective function \eqref{eq_objf-total-real}
according to
\begin{equation}\label{eq:FullyDiscreteObjective}
  {\cal G}_h(\bm{\alpha}) = {\cal J}_{h}(U_{Th}^\alpha, \bm{U}^\alpha,
  \bm{V}^\alpha),\quad
  {\cal J}_h(U_{Th},\bm{U},\bm{V}) := {\cal J}_{1h}(U_{Th}) + {\cal J}_{2h}(\bm{U}, \bm{V}).
\end{equation}
Here, $U_{Th}^\alpha$, $\bm{U}^\alpha$ and $\bm{V}^\alpha$ represent the
time-discrete solution of the St\"ormer-Verlet scheme for a given parameter vector $\bm{\alpha}$.
We note that ${\cal G}_h(\bm{\alpha})$ can be evaluated by accumulation during the time-stepping of
the St\"ormer-Verlet scheme.

\subsection{The discrete adjoint approach}\label{sec_discrete-adjoint}


The gradient of the discretized objective function can be derived from first order optimality
conditions of the corresponding discrete Lagrangian.
In this approach, let $(\bm{\mu}_j^n,\bm{\nu}_j^n)$ be the adjoint variables and let
$(\bm{M}_j^{n,i},\bm{N}_j^{n,i})$ be Lagrange multipliers. We define the discrete Lagrangian by
\begin{multline}\label{eq_disc-Lagrangian}
  {\cal L}_h(\bm{u},\bm{v},\bm{U},\bm{V}, \bm{\mu},\bm{\nu},\bm{M},\bm{N},\bm{\alpha}) = \\
  {\cal J}_{h}(U_{Th}, \bm{U}, \bm{V}) 
  - \sum_{j=0}^{E-1} \left( \left\langle \bm{u}_j^0 - \bm{g}_j^u, \bm{\mu}_j^0\right\rangle_2 + \left\langle \bm{v}_j^0
  - \bm{g}_j^v, \bm{\nu}_j^0\right\rangle_2 + \sum_{k=1}^6 T_j^k \right).
\end{multline}
The first two terms in the sum enforce the initial conditions \eqref{eq_initial-cond}. The terms
$T^1_j$ and $T^2_j$ enforce the time-stepping update formulas
\eqref{eq_u-update}-\eqref{eq_v-update} in the St\"ormer-Verlet scheme,
\begin{align}
T^1_j &= \sum_{n=0}^{M-1} \left\langle \bm{u}_j^{n+1} - \bm{u}_j^n - \frac{h}{2}\left(S_n\bm{U}_j^{n,1} +
S_{n+1}\bm{U}_j^{n,2} - K_n\bm{V}_j^{n,1} - K_{n+1}\bm{V}_j^{n,2}  \right), \bm{\mu}_j^{n+1} \right\rangle_2,\\
T^2_j &= \sum_{n=0}^{M-1} \left\langle \bm{v}_j^{n+1} - \bm{v}_j^n - \frac{h}{2}\left(K_{n+1/2}
\left(\bm{U}_j^{n,1} + \bm{U}_j^{n,2}\right) + S_{n+1/2} (\bm{V}_j^{n,1} + \bm{V}_j^{n,2}) \right), \bm{\nu}_j^{n+1} \right\rangle_2.
\end{align}
{\changethree The terms $T_j^3$ to $T_j^6$ enforce the relations between the stage variables
  \eqref{eq_U1}-\eqref{eq_V2} using the Lagrange multipliers $(\bm{M}_j^{n,i}$ and
  $\bm{N}_j^{n,i})$, see~\ref{appendix:adjoint} for details.}

To derive the discrete adjoint scheme, we note that the discrete Lagrangian \eqref{eq_disc-Lagrangian}
has a saddle point if
\begin{align}
  \frac{\p{\cal L}_h}{\p \bm{\mu}_j^n} =
  \frac{\p{\cal L}_h}{\p \bm{\nu}_j^n} =
  \frac{\p{\cal L}_h}{\p \bm{N}_j^{n,i}} = 
  \frac{\p{\cal L}_h}{\p \bm{M}_j^{n,i}} & = 0 ,\label{eq_forward}\\
  \frac{\p{\cal L}_h}{\p \bm{u}_j^n} =
  \frac{\p{\cal L}_h}{\p \bm{v}_j^n} =
  \frac{\p{\cal L}_h}{\p \bm{U}_j^{n,i}} =
  \frac{\p{\cal L}_h}{\p \bm{V}_j^{n,i}} &= 0,\label{eq_backward}
\end{align}
for $n=0,1,\ldots,M$, $i=1,2$ and $j=0,1,\ldots,E-1$. Here, the set of conditions in
\eqref{eq_forward} result in the St\"ormer-Verlet scheme \eqref{eq_initial-cond}-\eqref{eq_V2} for
evolving $(\bm{u}_j^n, \bm{v}_j^n, \bm{U}_j^{n,i}, \bm{V}_j^{n,i})$ forwards in time. The set of
conditions in \eqref{eq_backward} result in a time-stepping scheme for evolving the adjoint
variables $(\bm{\mu}_j^n, \bm{\nu}_j^n)$ backwards in time, as is made precise in the following
lemma.
\begin{lemma}\label{lem_adjoint-bck}
  Let ${\cal L}_h$ be the discrete Lagrangian defined by \eqref{eq_disc-Lagrangian}. Furthermore, let
  $(\bm{u}_j^n, \bm{v}_j^n, \bm{U}_j^{n,i}, \bm{V}_j^{n,i})$ satisfy the St\"ormer-Verlet scheme
  \eqref{eq_initial-cond}-\eqref{eq_V2} for a given parameter vector $\bm{\alpha}$. Then, the set of
  saddle-point conditions \eqref{eq_backward} are satisfied if the Lagrange multipliers
  $(\bm{\mu}_j^n, \bm{\nu}_j^n)$ are calculated according to the reversed time-stepping scheme,
\begin{align}
  \bm{\mu}_j^{M} & = \pdiff{{\cal J}_h}{\bm{u}_j^M}, \quad
  \bm{\nu}_j^{M} = \pdiff{{\cal J}_h}{\bm{v}_j^M}, \\
  \bm{\mu}_j^n &= \bm{\mu}_j^{n+1} - \frac{h}{2} \left(\bm{\kappa}_j^{n,1} +
  \bm{\kappa}_j^{n,2}\right),\label{eq_mu}\\ 
  \bm{\nu}_j^{n} &= \bm{\nu}_j^{n+1} - \frac{h}{2} \left( \bm{\ell}_j^{n,1} + \bm{\ell}_j^{n,2} \right) ,\label{eq_nu}
\end{align}
for $n=M -1, M -2, \ldots 0$. Because $S^T = -S$ and $K^T = K$, the slopes satisfy
\begin{align}
\bm{\kappa}_j^{n,1} &= S_n \bm{X}_j^{n} - K_{n+1/2} \bm{Y}_j^{n,1} - \frac{2}{h} \frac{\p {\cal
    J}_h}{\p \bm{U}_j^{n,1}} ,\label{eq_kappa1}\\
\bm{\kappa}_j^{n,2} &= S_{n+1} \bm{X}_j^{n} -
K_{n+1/2} \bm{Y}_j^{n,2} - \frac{2}{h} \frac{\p {\cal J}_h}{\p
  \bm{U}_j^{n,2}},\label{eq_kappa2}\\
\bm{\ell}_j^{n,1} &= K_{n} \bm{X}_j^{n} + S_{n+1/2}
\bm{Y}_j^{n,1} - \frac{2}{h} \frac{\p {\cal J}_h}{\p
  \bm{V}_j^{n,1}},\label{eq_ell1}\\
\bm{\ell}_j^{n,2} &= K_{n+1} \bm{X}_j^{n} + S_{n+1/2}
\bm{Y}_j^{n,2} - \frac{2}{h} \frac{\p {\cal J}_h}{\p \bm{V}_j^{n,2}},\label{eq_ell2}
\end{align}
where the stage variables are given by
\begin{align}
  \bm{X}_j^{n}   &= \bm{\mu}_j^{n+1} - \frac{h}{2} \bm{\kappa}_j^{n,2}, \label{eq_adjoint-stageX}\\
  \bm{Y}_j^{n,2} &= \bm{\nu}_j^{n+1},\label{eq_Y1}\\
  \bm{Y}_j^{n,1} &= \bm{\nu}_j^{n+1} -\frac{h}{2} \left( \bm{\ell}_j^{n,1} + \bm{\ell}_j^{n,2} \right).\label{eq_Y2}
\end{align}
\end{lemma}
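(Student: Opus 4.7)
The plan is to derive the adjoint time-stepping scheme by direct differentiation of $\mathcal{L}_h$, reading off the terminal condition from the boundary of the sum, the recurrence for $(\bm{\mu}_j^n,\bm{\nu}_j^n)$ from the interior derivatives with respect to $(\bm{u}_j^n,\bm{v}_j^n)$, and the formulas for the slopes/stage variables from the derivatives with respect to $(\bm{U}_j^{n,i},\bm{V}_j^{n,i})$. Because the forward scheme \eqref{eq_forward} is satisfied by assumption, we may freely assume the state and stage variables obey the St\"ormer--Verlet relations, so the only content of \eqref{eq_backward} is a system of linear equations for the multipliers $(\bm{\mu}_j^n,\bm{\nu}_j^n,\bm{M}_j^{n,i},\bm{N}_j^{n,i})$.

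First I would write out all six constraint terms $T_j^1,\dots,T_j^6$ explicitly (using the expressions promised in \ref{appendix:adjoint} for $T_j^3,\dots,T_j^6$), and differentiate ${\cal L}_h$ with respect to $\bm{u}_j^n$ for $0<n<M$. The variable $\bm{u}_j^n$ appears with coefficient $+1$ in $T_j^1$ at time-index $n$ and with coefficient $-1$ at time-index $n-1$, plus inside the stage-constraint terms $T_j^3,T_j^4$ (since $\bm{U}_j^{n,1}=\bm{u}_j^n$ and $\bm{U}_j^{n,2}=\bm{u}^n+\dots$) and inside $T_j^5$ through $\bm{V}_j^{n,1}$. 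Collecting these contributions and setting them to zero should give $\bm{\mu}_j^{n}-\bm{\mu}_j^{n+1}$ as a linear combination involving the multipliers $\bm{M}_j^{n,i},\bm{N}_j^{n,i}$ and a contribution from $\partial\mathcal{J}_h/\partial\bm{u}_j^n$. At $n=M$ only the backward difference remains, yielding the terminal condition $\bm{\mu}_j^M=\partial\mathcal{J}_h/\partial\bm{u}_j^M$. The analogous computation in $\bm{v}_j^n$ gives the companion recurrence and $\bm{\nu}_j^M=\partial\mathcal{J}_h/\partial\bm{v}_j^M$.

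Next I would differentiate with respect to the stage variables $\bm{U}_j^{n,i}$ and $\bm{V}_j^{n,i}$ to solve for $\bm{M}_j^{n,i},\bm{N}_j^{n,i}$. Using $S^T=-S$ and $K^T=K$ to transpose the matrices that act inside the inner products in $T_j^1$--$T_j^6$, these four equations should express the stage multipliers as linear combinations of $\bm{\mu}_j^{n+1}$, $\bm{\nu}_j^{n+1}$, and of each other, together with the source terms $\partial\mathcal{J}_h/\partial\bm{U}_j^{n,i}$ and $\partial\mathcal{J}_h/\partial\bm{V}_j^{n,i}$. The key identification is that the combinations appearing naturally are precisely the adjoint stage variables $\bm{X}_j^n,\bm{Y}_j^{n,1},\bm{Y}_j^{n,2}$, and that the right-hand sides of \eqref{eq_kappa1}--\eqref{eq_ell2} are then the adjoint slopes $\bm{\kappa}_j^{n,i},\bm{\ell}_j^{n,i}$. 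Substituting these back into the $\bm{\mu}_j^n,\bm{\nu}_j^n$ recurrences from the previous step should collapse everything to \eqref{eq_mu}--\eqref{eq_nu}.

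The main obstacle is bookkeeping: the state $\bm{u}_j^n$ couples into three different constraint terms (the update $T_j^1$, the stage-value constraints $T_j^3,T_j^4$, and implicitly into $T_j^5$ via $\bm{U}_j^{n,1}=\bm{u}^n$), and the Hamiltonian matrices appear evaluated at the three distinct time levels $t_n,t_{n+1/2},t_{n+1}$, producing an apparent asymmetry between the forward slopes $(\bm{\kappa},\bm{\ell})$ and the adjoint slopes. The main step where one has to be careful is checking that after transposing the matrices via $S^T=-S$, $K^T=K$, the time levels in \eqref{eq_kappa1}--\eqref{eq_ell2} come out as stated ($t_n$ and $t_{n+1}$ for $\bm{\kappa}$, $t_{n+1/2}$ for the $K$-coupling there, and the symmetric pattern for $\bm{\ell}$), which is exactly the ``modified time levels'' phenomenon emphasized in the introduction. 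Once this algebra is laid out carefully, the two sets of equations \eqref{eq_mu}--\eqref{eq_ell2} together with \eqref{eq_adjoint-stageX}--\eqref{eq_Y2} are forced, and Lemma~\ref{lem_adjoint-bck} follows.
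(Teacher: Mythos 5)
Your plan follows the same route as the paper's proof in \ref{appendix:adjoint}: differentiate ${\cal L}_h$ with respect to the state and stage variables, read off the terminal conditions and the multiplier equations, and then change variables to $\bm{X}_j^n=\bm{\mu}_j^{n+1}+\bm{M}_j^{n,2}$, $\bm{Y}_j^{n,1}=\bm{\nu}_j^{n+1}+\bm{N}_j^{n,1}+\bm{N}_j^{n,2}$, $\bm{Y}_j^{n,2}=\bm{\nu}_j^{n+1}$ so that the multiplier equations collapse into the stated reversed time-stepping scheme. That is exactly the ``tedious but straightforward calculation'' the paper carries out.

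One piece of your bookkeeping is wrong and would derail the computation if followed literally: you assert that $\bm{u}_j^n$ enters $T_j^5$ ``implicitly via $\bm{U}_j^{n,1}=\bm{u}_j^n$.'' In the Lagrangian \eqref{eq_disc-Lagrangian} the stage variables are \emph{independent} unknowns; the identity $\bm{U}_j^{n,1}=\bm{u}_j^n$ is not substituted but enforced as the constraint $T_j^3$ through the multiplier $\bm{M}_j^{n,1}$. Hence $\partial T_j^5/\partial\bm{u}_j^n=0$, and $\partial{\cal L}_h/\partial\bm{u}_j^n=0$ yields only $\bm{\mu}_j^n-\bm{\mu}_j^{n+1}=\bm{M}_j^{n,1}+\bm{M}_j^{n,2}$ --- with no $K_{n+1/2}\bm{N}$ term and, for $0<n<M$, no $\partial{\cal J}_h/\partial\bm{u}_j^n$ term either, since ${\cal J}_h$ depends on the interior trajectory only through the stage variables. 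The coupling of $\bm{N}_j^{n,1}$, $\bm{N}_j^{n,2}$ back into the $\bm{\mu}$-recurrence arrives only through the separate condition $\partial{\cal L}_h/\partial\bm{U}_j^{n,1}=0$, which is where $T_j^5$ and $T_j^6$ contribute their $K_{n+1/2}^T$ terms. If you prefer to eliminate $\bm{U}_j^{n,1}$ by substitution, you must do so consistently in $T_j^1$, $T_j^2$, $T_j^4$, $T_j^5$, $T_j^6$ and drop $T_j^3$ and $\bm{M}_j^{n,1}$ altogether; mixing the two conventions double-counts that dependence. With this corrected, the rest of your outline --- transposing with $S^T=-S$, $K^T=K$, identifying the adjoint stage variables, and substituting back --- reproduces \eqref{eq_mu}--\eqref{eq_Y2} exactly as in the appendix.
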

\begin{proof}
The lemma follows after a somewhat tedious but straightforward calculation shown in detail
in~\ref{appendix:adjoint}.
\end{proof}



{\changethree Corresponding to the continuous Schr\"odinger equation~\eqref{eq_real-schrodinger},
  the adjoint state equation (without forcing) is
\begin{equation}\label{eq_real-adjoint}
  \begin{bmatrix}
    \dot{\bm{\mu}}\\ \dot{\bm{\nu}}
  \end{bmatrix} =
  \begin{bmatrix}
    S(t) & -K(t) \\ 
    K(t) & S(t)
  \end{bmatrix}
  \begin{bmatrix} 
  \bm{\mu}\\ 
  \bm{\nu} 
  \end{bmatrix} 
  %
  =:
  \begin{bmatrix}
    \bm{f}^\mu(\bm{\mu},\bm{\nu},t)\\
    \bm{f}^\nu(\bm{\mu},\bm{\nu},t)
  \end{bmatrix},
\end{equation}
where we used that $S^T = - S$ and $K^T = K$.
\begin{corollary}
The time-stepping scheme \eqref{eq_mu}-\eqref{eq_Y2} (without forcing) is a consistent approximation
of the continuous adjoint state equation \eqref{eq_real-adjoint}. It can be written as a modified
partitioned Runge-Kutta method, where the Butcher coefficients are
\begin{align}
    a^\mu_{11} &= a^\mu_{21} = 1/2,\quad a^\mu_{12} = a^\mu_{22} = 0,\quad& 
    a^\nu_{11} &= a^\nu_{12} = 0, \quad  a^\nu_{21} = a^\nu_{22} = 1/2,\\
    b^\mu_{1} &= b^\mu_2 = \frac{1}{2},\quad  &   
    b^\nu_{1} &= b^\nu_2 = \frac{1}{2},
\end{align}
corresponding to the implicit midpoint rule for the $\bm{\mu}$-equation and the trapezoidal rule for
the $\bm{\nu}$-equation in \eqref{eq_real-adjoint}. The modifications to the partitioned Runge-Kutta
scheme concerns the formulae for the slopes, \eqref{eq_kappa1}-\eqref{eq_ell2}.
Because of the time-levels at which the matrices $K$ and $S$ are evaluated, it is {\em not} possible to define
Butcher coefficients $c^\mu_i$ and $c^\nu_i$ such that 
\begin{align*}
    \bm{\kappa}_j^{n,i} &= \bm{f}^\mu(\bm{X}_j^{n,i}, \bm{Y}_j^{n,i}, t_n + c^\mu_ih),\\
    \bm{\ell}_j^{n,i} &= \bm{f}^\nu(\bm{X}_j^{n,i}, \bm{Y}_j^{n,i}, t_n + c^\nu_ih).
\end{align*}
\end{corollary}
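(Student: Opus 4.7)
The plan is to prove the corollary in three parts: (i) match the scheme \eqref{eq_mu}--\eqref{eq_Y2} (with the forcing terms dropped) to the stated Butcher tableau by direct inspection; (ii) verify consistency with the continuous adjoint state equation \eqref{eq_real-adjoint}; and (iii) rule out the existence of $c^\mu_i$ and $c^\nu_i$ by a short contradiction argument.

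For part (i), I would identify the stage variables and read off the tableau. Combining \eqref{eq_adjoint-stageX} with \eqref{eq_mu} gives $\bm{X}_j^{n} = \bm{\mu}_j^n + (h/2)\,\bm{\kappa}_j^{n,1}$. Interpreting $\bm{X}_j^{n}$ as a common value of both stages, $\bm{X}_j^{n,1} = \bm{X}_j^{n,2}$, one reads off $a^\mu_{11} = a^\mu_{21} = 1/2$ and $a^\mu_{12} = a^\mu_{22} = 0$, the implicit-midpoint pattern. From \eqref{eq_Y1}--\eqref{eq_Y2} combined with \eqref{eq_nu}, the first stage satisfies $\bm{Y}_j^{n,1} = \bm{\nu}_j^n$ while the second stage satisfies $\bm{Y}_j^{n,2} = \bm{\nu}_j^{n+1} = \bm{\nu}_j^n + (h/2)(\bm{\ell}_j^{n,1} + \bm{\ell}_j^{n,2})$, yielding the trapezoidal coefficients $a^\nu_{11} = a^\nu_{12} = 0$ and $a^\nu_{21} = a^\nu_{22} = 1/2$. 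The weights $b^\mu_i = b^\nu_i = 1/2$ follow immediately from \eqref{eq_mu}--\eqref{eq_nu}. For part (ii), letting $h \to 0$ in \eqref{eq_kappa1}--\eqref{eq_ell2} collapses $t_n$, $t_{n+1/2}$ and $t_{n+1}$ to a common $t$, while the stage values satisfy $\bm{X}_j^n \to \bm{\mu}(t)$ and $\bm{Y}_j^{n,i} \to \bm{\nu}(t)$, so the slopes tend to $S(t)\bm{\mu} - K(t)\bm{\nu}$ and $K(t)\bm{\mu} + S(t)\bm{\nu}$, which coincide with $\bm{f}^\mu$ and $\bm{f}^\nu$ in \eqref{eq_real-adjoint}. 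Together with the standard first-order accuracy of the implicit-midpoint and trapezoidal abscissae just identified, this gives consistency.

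For part (iii), the essential observation is that two distinct time-dependent matrices are sampled at two different time levels inside a single slope: in $\bm{\kappa}_j^{n,1}$ the matrix $S$ is taken at $t_n$ while $K$ is taken at $t_{n+1/2}$; in $\bm{\kappa}_j^{n,2}$, $S$ sits at $t_{n+1}$ while $K$ again sits at $t_{n+1/2}$. If there existed $c^\mu_i$ with $\bm{\kappa}_j^{n,i} = \bm{f}^\mu(\bm{X}_j^{n,i},\bm{Y}_j^{n,i},t_n + c^\mu_i h) = S(t_n + c^\mu_i h)\bm{X}_j^{n,i} - K(t_n + c^\mu_i h)\bm{Y}_j^{n,i}$, then both matrices would have to share the single time $t_n + c^\mu_i h$, contradicting the split sampling above for arbitrary time-dependent $S$ and $K$. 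The analogous reasoning applied to \eqref{eq_ell1}--\eqref{eq_ell2} rules out $c^\nu_i$. I expect parts (i) and (ii) to be mechanical; the only subtle point is phrasing (iii) so that it distinguishes a harmless relabeling of stages from a genuine incompatibility with the standard PRK sampling rule for time-dependent right-hand sides, which is the main (minor) obstacle.
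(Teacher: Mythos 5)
Your proposal is correct and follows essentially the same route as the paper's proof in~\ref{app_cor1}: rewrite \eqref{eq_mu}--\eqref{eq_Y2} in forward form to read off the weights $b^\mu_i,b^\nu_i$ and the stage coefficients $a^\mu_{ij},a^\nu_{ij}$ (with $\bm{X}_j^{n,1}=\bm{X}_j^{n,2}=\bm{X}_j^n$), then note that the unforced slopes are consistent approximations of $\bm{f}^\mu$ and $\bm{f}^\nu$. Your part (iii) — that $S$ and $K$ are sampled at two different time levels within a single slope, so no single abscissa $t_n+c_i h$ can reproduce \eqref{eq_kappa1}--\eqref{eq_ell2} for generic time-dependent $S$ and $K$ — is the correct justification for the final claim, which the paper asserts from the slope formulae without spelling out.
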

\begin{proof} See~\ref{app_cor1}.
\end{proof}
}

Only the matrices $K$ and $S$ depend explicitly on $\bm{\alpha}$ in the discrete Lagrangian. When
the saddle point conditions \eqref{eq_forward} and \eqref{eq_backward} are satisfied, we can
therefore calculate the gradient of ${\cal G}_h$ by differentiating \eqref{eq_disc-Lagrangian},
\[
\frac{\p {\cal G}_h}{\p \alpha_r} = \frac{\p {\cal L}_h}{\p \alpha_r},\quad r=0,1,\ldots,E-1.
\]
This relation leads to the following lemma.
\begin{lemma}\label{lem_adjoint-disc}
  Let ${\cal L}_h$ be the discrete Lagrangian defined by \eqref{eq_disc-Lagrangian}. Assume that
  $(\bm{u}_j^n, \bm{v}_j^n, \bm{U}_j^{n,i}, \bm{V}_j^{n,i})$ are calculated according to the
  St\"ormer-Verlet scheme for a given parameter vector $\bm{\alpha}$.  Furthermore, assume that
  $(\bm{\mu}_j^n,\bm{\nu}_j^n, \bm{X}_j^{n}, \bm{Y}_j^{n,i})$ satisfy the adjoint time-stepping
  scheme in Lemma \ref{lem_adjoint-bck}, subject to the terminal conditions
  \[
  \bm{\mu}_j^{M}  = -\frac{2}{E^2}\left( \text{Re}(S_{Vh})\bm{d}^u_j -
  \text{Im}(S_{Vh}) \bm{d}_j^v  \right), \quad
  \bm{\nu}_j^{M} = -\frac{2}{E^2}\left( \text{Re}(S_{Vh})\bm{d}^v_j +
  \text{Im}(S_{Vh}) \bm{d}_j^u  \right),
  \]
  and the forcing functions
  \begin{align*}
  \frac{\p {\cal J}_h}{\p \bm{U}_j^{n,1}} &= \frac{h}{T} W
  \bm{U}^{n,1}_j, \quad
  \frac{\p {\cal J}_h}{\p \bm{U}_j^{n,2}} = \frac{h}{T} W
  \bm{U}^{n,2}_j \\
  \frac{\p {\cal J}_h}{\p \bm{V}_j^{n,1}} &= \frac{h}{T} W
  \bm{V}^{n,1}_j,\quad
  \frac{\p {\cal J}_h}{\p \bm{V}_j^{n,2}} = 0.
  \end{align*}
  Then, the saddle-point conditions \eqref{eq_forward} and \eqref{eq_backward} are satisfied and the
  gradient of the objective function \eqref{eq:FullyDiscreteObjective} is given by
  \begin{align}
    \frac{\p {\cal G}_h}{\p \alpha_r} & = \frac{h}{2} \sum_{j=0}^{E-1} \sum_{n=0}^{M-1} \left\{  \left\langle S'_n\bm{U}_j^{n,1} 
    + S'_{n+1}\bm{U}_j^{n,2} - (K'_n + K'_{n+1}) \bm{V}_j^{n,1}, \bm{X}_j^{n} \right\rangle_2\, \right.\nonumber\\
    & \left. + \left\langle K'_{n+1/2} \bm{U}_j^{n,1} + S'_{n+1/2} \bm{V}_j^{n,1},
    \bm{Y}_j^{n,1}\right\rangle_2
    +  \left\langle K'_{n+1/2} \bm{U}_j^{n,2} + S'_{n+1/2} \bm{V}_j^{n,1},\bm{Y}_j^{n,2} \right\rangle_2  \right\}, \label{eq_L-grad}
  \end{align}
  where $S'_n = \p S/\p \alpha_r (t_n)$, $K'_{n+1/2}= \p K/\p \alpha_r (t_{n+1/2})$, etc.
\end{lemma}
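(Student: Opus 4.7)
The plan is to treat the lemma as a verification problem rather than a derivation: Lemma~\ref{lem_adjoint-bck} already supplies the general adjoint recursion once the terminal data $\partial{\cal J}_h/\partial\bm{u}_j^M,\,\partial{\cal J}_h/\partial\bm{v}_j^M$ and the forcings $\partial{\cal J}_h/\partial\bm{U}_j^{n,i},\,\partial{\cal J}_h/\partial\bm{V}_j^{n,i}$ are computed for the specific ${\cal J}_h={\cal J}_{1h}+{\cal J}_{2h}$ of \eqref{eq:FullyDiscreteObjective}. So the first task is to evaluate these partial derivatives and observe that they coincide with the quantities stated in the lemma; the second, independent task is to compute $\partial{\cal L}_h/\partial\alpha_r$ and to manipulate it into \eqref{eq_L-grad}.

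First I would handle the terminal conditions. Only ${\cal J}_{1h}=1-E^{-2}|S_{Vh}|^2$ contributes, because ${\cal J}_{2h}$ depends on $\bm{u}_j^M,\bm{v}_j^M$ only implicitly. Writing $|S_{Vh}|^2=(\mathrm{Re}\,S_{Vh})^2+(\mathrm{Im}\,S_{Vh})^2$ and reading the real and imaginary parts of $S_{Vh}$ directly from the explicit inner-product expansion, the chain rule gives $\partial{\cal J}_{1h}/\partial\bm{u}_j^M=-2E^{-2}(\mathrm{Re}\,S_{Vh}\,\bm{d}_j^u-\mathrm{Im}\,S_{Vh}\,\bm{d}_j^v)$, and analogously for $\bm{v}_j^M$. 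These match the stated $\bm{\mu}_j^M$ and $\bm{\nu}_j^M$. The forcing formulas then come from direct differentiation of the quadratic form \eqref{eq:cost_guardLevel}: each $\frac12\langle\bm{U}_j^{n,i},W\bm{U}_j^{n,i}\rangle_2$ contributes $W\bm{U}_j^{n,i}$ and the asymmetric coefficient on $\bm{V}_j^{n,1}$ (with no $\bm{V}_j^{n,2}$ term) accounts for the vanishing of $\partial{\cal J}_h/\partial\bm{V}_j^{n,2}$.

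Next I would derive \eqref{eq_L-grad}. Because the saddle-point conditions \eqref{eq_forward}, \eqref{eq_backward} hold by the first part of the lemma together with Lemma~\ref{lem_adjoint-bck}, the envelope identity $\partial{\cal G}_h/\partial\alpha_r=\partial{\cal L}_h/\partial\alpha_r$ applies with every solution and multiplier held fixed. The $\alpha$-dependence in ${\cal L}_h$ enters only through $K(t,\bm{\alpha})$ and $S(t,\bm{\alpha})$, which appear in the time-stepping constraint terms $T_j^1,T_j^2$ and in the stage-equation constraint terms $T_j^3,\ldots,T_j^6$. I would differentiate each of these six sums termwise, producing inner products of $K'_\bullet,S'_\bullet$ acting on stage values $\bm{U}_j^{n,i},\bm{V}_j^{n,i}$ against the multipliers $\bm{\mu}_j^{n+1},\bm{\nu}_j^{n+1},\bm{M}_j^{n,i},\bm{N}_j^{n,i}$. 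Then I would invoke the relations between $\bm{M}_j^{n,i},\bm{N}_j^{n,i}$ and the adjoint stage variables $\bm{X}_j^n,\bm{Y}_j^{n,i}$ that are established in~\ref{appendix:adjoint} as part of the proof of Lemma~\ref{lem_adjoint-bck}. After collecting terms time-level by time-level, the $\bm{M},\bm{N}$ multipliers are eliminated in favor of $\bm{X},\bm{Y}$ and the expression reduces to \eqref{eq_L-grad}.

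The routine pieces are the two partial-derivative evaluations; the main obstacle is the bookkeeping in the gradient computation, where six constraint terms contribute $K'$ and $S'$ pieces at the time levels $t_n,t_{n+1/2},t_{n+1}$ and must be regrouped so that the stage-equation multipliers $\bm{M}_j^{n,i},\bm{N}_j^{n,i}$ combine cleanly with the time-stepping multipliers $\bm{\mu}_j^{n+1},\bm{\nu}_j^{n+1}$ to produce precisely the adjoint stage variables $\bm{X}_j^n$ and $\bm{Y}_j^{n,i}$ of Lemma~\ref{lem_adjoint-bck}. Since that elimination is already carried out in the appendix to produce the backward scheme, the argument here is essentially to re-use those algebraic identities on the $\alpha_r$-differentiated Lagrangian rather than on the state-variable partial derivatives.
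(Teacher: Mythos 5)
Your proposal follows the paper's own proof essentially verbatim: the appendix likewise obtains \eqref{eq_L-grad} by differentiating the six constraint terms $T_j^1,\ldots,T_j^6$ with respect to $\alpha_r$ (noting $\p T_j^3/\p\alpha_r=0$), then eliminating the multipliers $\bm{M}_j^{n,i},\bm{N}_j^{n,i}$ via the change of variables $\bm{X}_j^{n}=\bm{\mu}_j^{n+1}+\bm{M}_j^{n,2}$, $\bm{Y}_j^{n,1}=\bm{\nu}_j^{n+1}+\bm{N}_j^{n,1}+\bm{N}_j^{n,2}$, $\bm{Y}_j^{n,2}=\bm{\nu}_j^{n+1}$ already established in the proof of Lemma~\ref{lem_adjoint-bck}, grouping $T_j^1+T_j^4$ and $T_j^2+T_j^5+T_j^6$ and using $\bm{V}_j^{n,1}=\bm{V}_j^{n,2}$ to collect the result into \eqref{eq_L-grad}. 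Your preliminary verification of the terminal conditions and forcing functions by direct differentiation of ${\cal J}_{1h}$ and ${\cal J}_{2h}$ is the intended (if unwritten) justification of those hypotheses, and is correct.
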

\begin{proof}
See~\ref{appendix:adjointGrad}.
\end{proof}
As a result of Lemma~\ref{lem_adjoint-disc}, all components of the gradient can be calculated from
$(\bm{u}_j^n, \bm{v}_j^n, \bm{U}_j^{n,i}, \bm{V}_j^{n,1})$ and the adjoint variables
$(\bm{\mu}_j^n,\bm{\nu}_j^n, \bm{X}_j^{n},\bm{Y}_j^{n,i})$.  The first set of variables are obtained
from time-stepping the St\"ormer-Verlet scheme forward in time, while the second set of variables
follow from time-stepping the adjoint scheme backward in time.

We can avoid storing the time-history of $(\bm{u}^n_j, \bm{v}^n_j, \bm{U}_j^{n,i}, \bm{V}_j^{n,1})$
by using the time-reversibility of the St\"ormer-Verlet scheme. However, in order to do so, we must
first calculate the terminal conditions $(\bm{u}_j^{M},\bm{v}_j^{M})$ by evolving
\eqref{eq_initial-cond}-\eqref{eq_V2} forwards in time. The time-stepping can then be reversed and
the gradient of the objective function \eqref{eq_L-grad} can be accumulated by simultaneously
time-stepping the adjoint system \eqref{eq_mu}-\eqref{eq_Y2} backwards in time.

\section{Quadratic B-splines with carrier waves}\label{sec:B-Splines}

{\changed
Let $A(t)$ and $\phi(t)$ be real-valued amplitude and phase functions of time. By taking the control
functions in the rotating frame Hamiltonian \eqref{eq:hamiltonian-total} to be
\begin{align*}
  p(t) = A(t) \cos(\phi(t)),\quad  q(t) = A(t) \sin(\phi(t)),
\end{align*}
the relation \eqref{eq_rot-ansatz} results in the laboratory frame control function
\begin{equation}\label{eq_control_lab}
  f(t) =
  2 A(t)\cos(\omega_a t + \phi(t)).
\end{equation}
We expand the amplitude function in a set of basis functions $\{B_k\}_{k=1}^{D_1}$ and
start by considering the case of one carrier wave. We make the ansatz,
\[
A(t) = \sum_{k=1}^{D_1} B_k(t) \beta_k,
\]
where $\beta_k$ are real coefficients. By defining the phase as $\phi_k(t) = \Omega t + \theta_k$,
\begin{align*}
  p(t) &=  \sum_{k=1}^{D_1} B_k(t) \beta_k \cos( \Omega t + \theta_k) =
  \sum_{k=1}^{D_1} B_k(t) \left[ \alpha^{(1)}_k \cos(\Omega t) - \alpha^{(2)}_k \sin(\Omega t) \right],\\
  q(t) &=  \sum_{k=1}^{D_1} B_k(t) \beta_k \sin( \Omega t +\theta_k) =
  \sum_{k=1}^{D_1} B_k(t) \left[ \alpha^{(1)}_k \sin(\Omega t) + \alpha^{(2)}_k \cos(\Omega t) \right].
\end{align*}
where
\[
\alpha^{(1)}_k = \beta_k \cos(\theta_k),\quad   \alpha^{(2)}_k = \beta_k \sin(\theta_k).
\]
In the laboratory frame, the resulting control function becomes
\begin{align*}
  f(t) = 2 \sum_{k=1}^{D_1} B_k(t) \beta_k \cos((\omega_a +  \Omega) t + \theta_k).
%
\end{align*}

The case with one carrier wave is straightforward to generalize to multiple frequencies,
$\{\Omega_\ell\}_{\ell=1}^{N_f}$. This leads to a laboratory frame control function with a
spectrum that can be precisely specified to match the transition frequencies of the system,
%
\begin{align*}
f(t) = 2 \sum_{\ell=1}^{N_f} \sum_{k=1}^{D_1} B_k(t) \beta_{k,\ell} \cos((\omega_a +  \Omega_\ell) t
+ \theta_{k,\ell}).
\end{align*}
The total number of control parameters becomes $D = 2 N_f D_1$, which
equals the size of the parameter vector $\bm{\alpha}$. Here, $N_f$ is the number of frequencies and
$D_1\geq 1$ is the number of basis functions per frequency.
}

In this paper we use the quadratic B-spline basis (see Figure~\ref{fig_bspline}) to represent the
amplitude and phase of the control functions.
\begin{figure} 
    \centering
    \includegraphics[width=0.6\linewidth]{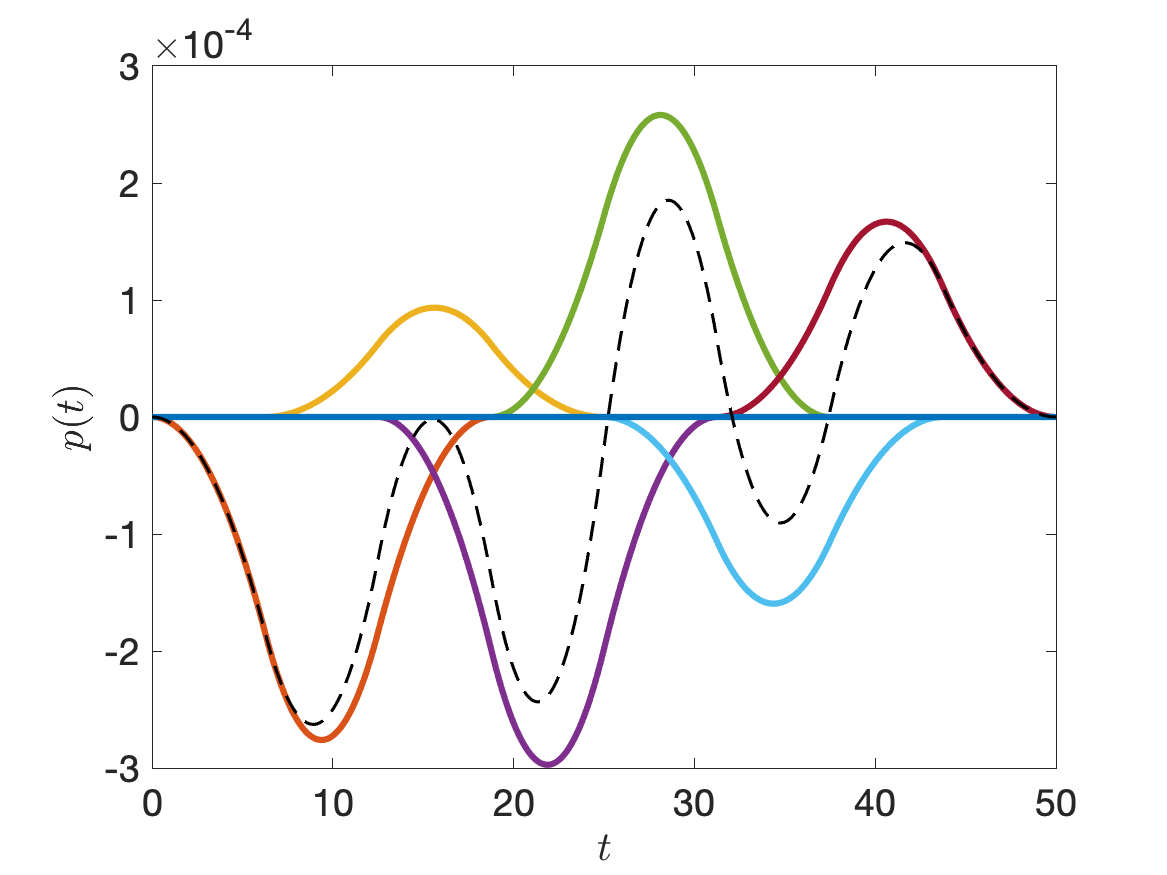}
    \caption{A B-spline control function $p(t,\bm{\alpha})$ without carrier wave ($\Omega_1=0$ and
      $N_f=1$). Here, the black dashed line is the control function and the solid colored lines are
      the individual B-spline basis functions, scaled by $\alpha^{1}_{m,1}$. In this case,
      $D_1=6$.}  \label{fig:spline_signal}
\end{figure}
Here, each basis function is a piecewise
quadratic polynomial in time. It is the lowest order B-spline function that has at least one
continuous derivative. We define the basis functions on a uniform grid in time,
\begin{equation}\label{eq_b-grid}
t_m = (m-1.5)\delta,\quad m=1,\ldots,D_1,\quad \delta = \frac{T}{D_1-2}.
\end{equation}
Each basis function $B_m(t)$ is centered around $t=t_m$ and is easily expressed in terms of the
scaled time parameter $\tau_m(t) = (t-t_m)/3\delta$,
\begin{align}\label{eq:splinebasis}
  B_m(t) = \widetilde{B}(\tau_m(t)),\quad
  \widetilde{B}(\tau) = \begin{cases}
    \frac{9}{8} + \frac{9}{2} \tau + \frac{9}{2} \tau^2, \quad & -\frac{1}{2} \leq \tau < -\frac{1}{6}, \\
    \frac{3}{4} - 9 \tau^2, \quad  & -\frac{1}{6} \leq \tau < \frac{1}{6}, \\
    \frac{9}{8} - \frac{9}{2} \tau + \frac{9}{2} \tau^2, \quad & \hphantom{-} \frac{1}{6} \leq \tau < \frac{1}{2}, \\
    0, \quad & \mbox{otherwise}.
  \end{cases}
\end{align}
Note that $B_m(t)$ is only non-zero in the interval $t \in [t_m -1.5\delta, t_m +1.5\delta]$. Thus,
for any fixed time $t$, a control function will only get contributions from at most three B-spline
basis functions. This property allows the control functions to be evaluated very efficiently.

\section{Numerical optimization}\label{sec_numopt}

Our numerical solution of the optimal control problem is based on the general purpose interior-point
optimization package IPOPT~\cite{Wachter2006}. This open-source library implements a primal-dual
barrier approach for solving large-scale nonlinear programming problems, i.e., it minimizes an
objective function subject to inequality (barrier) constraints on the parameter vector. Because the
Hessian of the objective function is costly to calculate, we use the L-BFGS
algorithm~\cite{Nocedal-Wright} in IPOPT, which only relies on the objective function and its
gradient to be evaluated. Inequality constraints that limit the amplitude of the parameter vector
$\bm{\alpha}$ are enforced internally by IPOPT.

The routines for evaluating the objective function and its gradient are implemented in the Julia
programming language~\cite{julia}, which provides a convenient interface to IPOPT. Given a parameter
vector $\bm{\alpha}$, the routine for evaluating the objective function solves the Schr\"odinger
equation with the St\"ormer-Verlet scheme and evaluates ${\cal G}_h(\bm{\alpha})$ by
accumulation. The routine for evaluating the gradient first applies the St\"ormer-Verlet scheme to
calculate terminal conditions for the state variables. It then proceeds by accumulating the gradient
$\nabla_\alpha{\cal G}_h$ by simultaneous reversed time-stepping of the discrete adjoint scheme and
the St\"ormer-Verlet scheme.
{\changed   These two fundamental routines, together with functions for
  setting up the Hamiltonians, estimating the time step, setting up constraints on the parameter
  vector, post-processing and plotting of the results have been implemented in the software package
  JuQBox, which was used to generate the numerical results below.

  The adjoint gradient implementation has been verified against a centered finite difference
  approximation of the discrete objective function by perturbing each component of the parameter
  vector. To further verify our implementation, we also calculated the discrete gradient by differentiating the
  St\"ormer-Verlet scheme with respect to each component of the parameter vector. This gradient
  agreed with the adjoint gradient to within $11\text{-}12$ digits. (Data not shown to conserve space.)
}

\subsection{A CNOT gate on a single qudit with guard levels}

To test our methods on a quantum optimal control problem, we consider realizing a CNOT gate on a
single qudit with four essential energy levels and two guard levels.  {\changed The qudit is modeled
  in the rotating frame of reference using the Hamiltonian \eqref{eq:hamiltonian-total} with
  fundamental frequency $\omega_a/2\pi= 4.10336$ GHz and self-Kerr coefficient $\xi_a/2\pi= 0.2198$
  GHz.}  We parameterize the two control functions using B-splines with carrier waves and choose the
frequencies to be $\Omega_1=0$, $\Omega_2 = -\xi_a$ and $\Omega_3 = -2 \xi_a$. In the rotating
frame, these frequencies correspond to transitions between the ground state and the first exited
state, the first and second excited states and the second and third excited states. We discourage
population of the fourth and fifth excited states using the weight matrix
$W=\text{diag}[0,0,0,0,0.1,1.0]$ in ${\cal J}_{2h}$, see \eqref{eq:objf-guard}. We use $D_1=10$
basis functions per frequency and control function, resulting in a total of $D=60$ parameters. The
amplitudes of the control functions are limited by the constraint
\begin{equation}\label{eq_amp-const}
  \|\bm{\alpha}\|_\infty := \max_{1\leq r\leq D}|\alpha_r|\leq \alpha_{max}.
\end{equation}
{\changed We set the gate duration to $T=100$ ns and} estimate the time
step using the technique in Section~\ref{sec_time-step-est}. To guarantee at least $C_P=40$ time
steps per period, we use {\changed $M=8,796$ time steps, corresponding to $h\approx 1.136\cdot
  10^{-2}$ ns.}

As initial guess for the elements of the parameter vector, we use a random number generator with a
uniform distribution in $[-0.01, 0.01]$.  {\changed In Figure~\ref{fig_ipopt-convergence} we present
  the convergence history with the two parameter thresholds $\alpha_{max}/2\pi = 4$ MHz and 3 MHz,
  respectively. We show the objective function ${\cal G}$, decomposed into ${\cal J}_{1h}$ and
  ${\cal J}_{2h}$, together with the norm of the dual infeasibility, $\|\nabla_\alpha{\cal G} -
  z\|_\infty$, that IPOPT uses to monitor convergence, see~\cite{Wachter2006} for details. For the
  case with $\alpha_{max}/2\pi = 3$ MHz, IPOPT converges well and needs 126 iteration to reduce the
  dual infeasibility to $10^{-5}$, which was used as convergence criteria. However, when the
  parameter constraint is relaxed to $\alpha_{max}/2\pi = 4$ MHz, the convergence of IPOPT stalls
  after about 100 iterations and is terminated after 200 iterations.}
\begin{figure}
  \centering
  \includegraphics[width=0.49\textwidth]{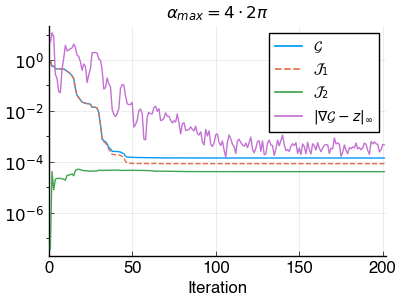}
  \includegraphics[width=0.49\textwidth]{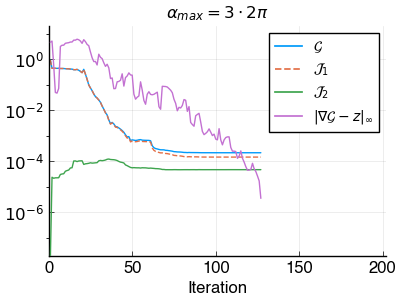}
  \caption{Convergence of the IPOPT iteration for the CNOT gate with the parameter constraint
    $\|\alphab\|_\infty \leq \alpha_{max}$. Here, $\alpha_{max}/2\pi = 4$ MHz (left) and $\alpha_{max}/2\pi = 3$ MHz (right).}
  \label{fig_ipopt-convergence}
\end{figure}

{\changed For the converged solution with parameter constraint $\alpha_{max}/2\pi = 3$ MHz, the two parts of the objective
  function are ${\cal J}_{1h} \approx 1.47\cdot 10^{-4}$ and ${\cal J}_{2h} \approx 4.72\cdot
  10^{-5}$, corresponding to a trace fidelity greater than $0.9998$.
  The population of the guard states remains small for all times and initial conditions. In
  particular, the ``forbidden'' state $|5\rangle$ has a population 
  that remains below $4.04\cdot 10^{-7}$, see Figure~\ref{fig:CNOT-forb}.}
\begin{figure}
  \centering
  \includegraphics[width=0.8\textwidth]{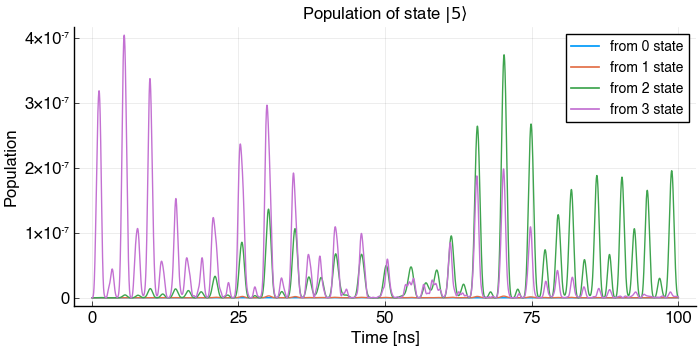}
  \caption{The population of the ``forbidden'' state $|5\rangle$ as function of time for the four
    initial conditions of the CNOT gate. Here, $\alpha_{max}/2\pi = 3$ MHz.}
  \label{fig:CNOT-forb}
\end{figure}
The optimized control functions are shown in Figure~\ref{fig:CNOT-ctrl} and the population of the
essential states, corresponding to the four initial conditions of the CNOT gate, are presented in
Figure~\ref{fig:CNOT-prob}.
\begin{figure}
  \centering
  \includegraphics[width=0.8\textwidth]{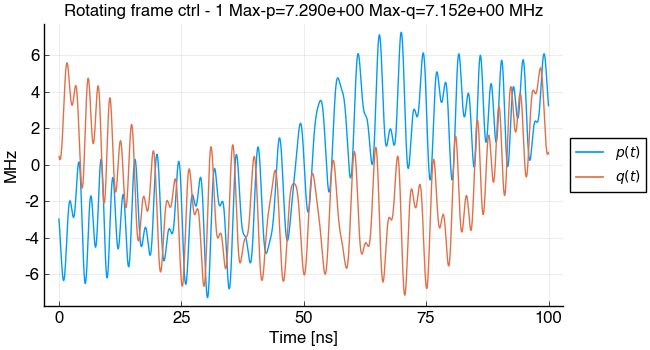}
  \caption{The rotating frame control functions $p(t)$ (blue) and $q(t)$ (orange) for realizing a CNOT gate with
    $D_1=10$ basis function per carrier wave and three carrier wave frequencies.  Here, $\alpha_{max}/2\pi = 3$ MHz.}
  \label{fig:CNOT-ctrl}
\end{figure}
\begin{figure}
  \centering
  \includegraphics[width=0.95\textwidth]{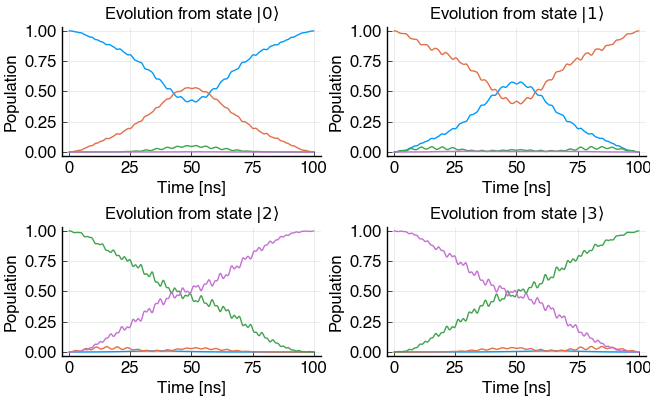}
  \caption{The population of the states $|0\rangle$ (blue), $|1\rangle$ (orange), $|2\rangle$
    (green) and $|3\rangle$ (purple), as function of time, for each initial condition of the
    CNOT gate. Here, $\alpha_{max}/2\pi = 3$ MHz.}
  \label{fig:CNOT-prob}
\end{figure}

{\changed Even though the dual infidelity does not reach the convergence criteria with the parameter
  threshold $\alpha_{max}/2\pi = 4$ MHz, the resulting control functions give a very small objective
  function. Here,
  ${\cal J}_{1h} \approx 8.56\cdot 10^{-5}$ and ${\cal J}_{2h} \approx 4.15\cdot 10^{-5}$,
  corresponding to a trace fidelity greater than $0.9999$.  The population of the ``forbidden'' state
  $|5\rangle$ has a population that remains below $3.39\cdot 10^{-7}$.}


\subsection{The Hessian of the objective function}
{\changed The numerical results shown in Figure~\ref{fig_ipopt-convergence} illustrate that the
  convergence properties of the optimization algorithm depend on the parameter constraints. To gain
  clarity into the local landscape of the optima we study the Hessian of the objective function.}
Let the optima correspond to the parameter vector $\bm{\alpha}^*$. Based on the adjoint scheme for
calculating the gradient, we can approximate the elements of the Hessian matrix using a centered
finite difference approximation,
\begin{equation}
\label{eq_hess-fd}
\frac{\p^2 {\cal G}_{h}(\bm{\alpha}^*)}{\p \alpha_j \p \alpha_k} \approx
\frac{1}{2\varepsilon}\left(
\frac{\p {\cal G}_{h}}{\p \alpha_j}(\bm{\alpha}^*+\varepsilon \bm{e}_k) -
\frac{\p {\cal G}_{h}}{\p \alpha_j}(\bm{\alpha}^*-\varepsilon \bm{e}_k)
\right) := L_{j,k},
\end{equation}
for $j,k=1,2,\ldots,D$. To perform this calculation, the gradient must be evaluated for the $2D$
parameter vectors $(\bm{\alpha}^*\pm\varepsilon \bm{e}_k)$. Because the objective function and the
parameter vector are real-valued, the gradient and the Hessian are also real-valued. Due to the
finite difference approximation, the matrix $L$ is only approximately equal to the Hessian. The
accuracy in $L$ is estimated in Table~\ref{tab_fd-hess} by studying the norm of its asymmetric
part, which is zero for the Hessian.
\begin{table}
\centering
\begin{tabular}{c|c|c} 
      $\varepsilon$ & $\| 0.5(L+L^T) \|_F$ & $\| 0.5(L - L^T) \|_F$ \\
      \hline
      $10^{-4}$ & $4.95\cdot 10^{3}$ & $1.99\cdot 10^{-4}$ \\
      $10^{-5}$ & $4.95\cdot 10^{3}$ & $2.01\cdot 10^{-6}$ \\
      $10^{-6}$ & $4.95\cdot 10^{3}$ & $1.46\cdot 10^{-6}$ \\
      $10^{-7}$ & $4.95\cdot 10^{3}$ & $1.47\cdot 10^{-5}$ \\
\end{tabular}
\caption{The Frobenius norm of the symmetric and asymmetric parts of the approximate Hessian, $L$,
  for the case $\alpha_{max}/2\pi = 3.0$ MHz.}\label{tab_fd-hess}
\end{table}
Based on this experiment we infer that $\varepsilon=10^{-6}$ is appropriate to use for approximating
the Hessian in \eqref{eq_hess-fd}. To eliminate spurious effects from the asymmetry in the $L$
matrix, we study the spectrum of its symmetric part, $L_{s} = 0.5(L+L^T)$. Because it is real and
symmetric, it has a complete set of eigenvectors and all eigenvalues are real.

The eigenvalues of the Hessian are shown in Figure~\ref{fig:CNOT-spectrum} for both values of the
parameter threshold, $\alpha_{max}$.
\begin{figure}
  \centering
  \includegraphics[width=0.49\textwidth]{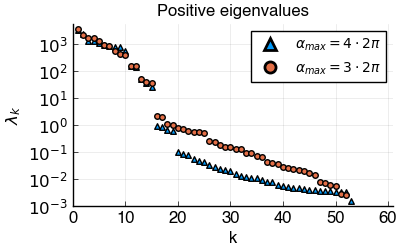}
  \includegraphics[width=0.49\textwidth]{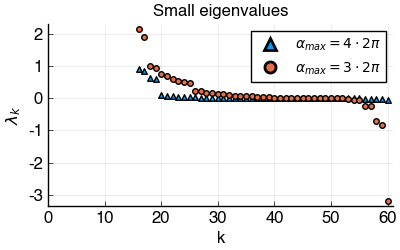}
  \caption{The eigenvalues of the symmetric part of the approximate Hessian, $0.5(L+L^T)$, evaluated
    at the optima for the parameter thresholds {\changed $\alpha_{max}/2\pi = 4$ MHz (blue
      triangles) and $\alpha_{max}/2\pi = 3$ MHz (orange circles)}. The positive eigenvalues are
    shown on a log-scale on the left and the small eigenvalues are shown on a linear scale on the
    right.}
  \label{fig:CNOT-spectrum}
\end{figure}
{\changed Two properties of the spectra are noteworthy. First, a few eigenvalues are negative. This
  may be an artifact related to the elements of the parameter vector that are close to their
  bounds. As a result the landscape of the objective function may not be accurately represented by
  the corresponding components of the Hessian.  The second interesting property is
  that the 15 largest eigenvalues are significantly larger than the rest. This indicates that the
  control functions are essentially described by the 15 eigenvectors associated with those
  eigenvalues. As a result, the objective function varies much faster in those directions than in the
  directions of the remaining 45 eigenvectors and this may hamper the convergence of the optimization
  algorithm in that subspace. However, most of those 45 eigenvalues become larger when the parameter
  threshold is reduced from $\alpha_{max}/2\pi = 4$ MHz to $\alpha_{max}/2\pi = 3$ MHz. This
  indicates that the constraints on the parameter vector have a regularizing effect on the
  optimization problem and may explain why the latter case converges better (see
  Figure~\ref{fig_ipopt-convergence}).}


{\changed
\section{Comparing JuQBox with QuTiP/pulse\_optim and Grape-TF}\label{sec_compare}

The QuTiP/pulse\_optim package is part of the QuTiP~\cite{qutip} framework and implements the GRAPE
algorithm in the Python language. The Grape-TF code (TF is short for
Tensorflow~\cite{abadi2016tensorflow}) is also implemented in Python and provides an enhanced
implementation of the GRAPE algorithm, as described by Leung et al.~\cite{Leung-2017}. It is
callable from QuTiP and shares a similar problem setup with the pulse\_optim function.

To compare the JuQBox code with pulse\_optim and Grape-TF, we consider a set of SWAP gates. These
gates transform the ground state $|0\rangle$ to excited state $|d\rangle$, and vice versa. The
transformation can be described by the unitary matrix
\begin{align}\label{eq_essential-target}
  V_g =
  \begin{bmatrix}
    0 & 0 & \cdots & 0 & 1 \\
    0 & 1 & 0 & \cdots & 0 \\
    \vdots & \ddots & \ddots & \ddots & \vdots \\
     0 & \cdots & 0 & 1 & 0 \\
    1 & 0 & \cdots & 0 & 0
  \end{bmatrix} \in \mathbb{C}^{(d+1)\times (d+1)},
\end{align}
which involves $E=d+1$ essential states. To evaluate how much leakage occurs to higher energy
levels, we add one guard (forbidden) level ($G=1$) and evolve a total of $N=d+2$ states in
Schr\"odinger's equation. As before, the guard level is left unspecified in the target gate
transformation. We consider implementing the SWAP gates on a multi-level qudit that can be described
by the fundamental frequency $\omega_a/2\pi = 4.8$ GHz and the self-Kerr coefficient $\xi_a/2\pi =
0.22$ GHz. We apply the rotating wave approximation, where the angular frequency of the rotation is
$\omega_a$, resulting in the Hamiltonian model \eqref{eq:hamiltonian-total}. As a realistic model
for current superconducting quantum devices, we impose the control amplitude restrictions
\begin{align} \label{eq_amp-bounds}
  \mbox{max}_t|p(t,\alphab)| \leq c_\infty,\quad
  \mbox{max}_t|q(t,\alphab)| \leq c_\infty,\quad
  \frac{c_\infty}{2\pi} = 9\,\mbox{MHz},
\end{align}
in the rotating frame of reference.

\subsection{Setup of simulation codes}
QuTiP/pulse\_optim can minimize the target gate fidelity, ${\cal G}_1$, but does
not suppress occupation of higher energy states. Thus, it does {\em not} minimize terms of the type
${\cal G}_2$.  As a proxy for ${\cal G}_2$, we append one additional energy level to the simulation and
measure its occupation as an estimate of leakage to higher energy states.  In pulse\_optim, the
control functions are discretized on the same grid in time as Schr\"odinger's equation and no
smoothness conditions are imposed. In our tests, we use a random initial guess for the parameter
vector.

Grape-TF discretizes the control functions on the same grid in time as Schr\"odinger's equation. It
minimizes an objective function that consists of a number of user-configurable parts. In our test,
we minimize the gate infidelity (${\cal G}_1$) and the occupation of one guard (forbidden) energy
level (similar to ${\cal G}_2$). To smooth the control functions in time, the objective function
also contains additional terms to minimize their first and second time derivatives. The various
parts of the objective function are weighted together by user-specified coefficients. The gradient
of the objective function is calculated using the automatic differentiation (AD) technique, as
implemented in the Tensorflow package. In our tests, we use a random initial guess for the control
vector.

In JuQBox, we trigger the first $d$ transition frequencies in the Hamiltonian by using $d$ carrier
waves in the control functions, with angular frequencies
\begin{align*}
    \Omega_k = (k-1) (-\xi_a),\quad k=1,2,\ldots,N_f,\quad N_f = d.
\end{align*}
Similar to pulse\_optim and Grape-TF, a pseudo-random number generator is used to construct the
initial guess for the parameter vector.

The pulse\_optim and JuQBox simulations were run on a Macbook Pro with a 2.6 GHz Intel iCore-7
processor. To utilize the GPU acceleration in Tensorflow, the Grape-TF simulations were run on one
node of the Pascal machine at Livermore Computing, where each node has an Intel XEON E5-2695 v4
processor with two NVIDIA P-100 GPUs.

\subsection{Numerical results}

A SWAP gate where the control functions meet the control amplitude bounds \eqref{eq_amp-bounds} can
only be realized if the gate duration is sufficiently long. Furthermore, the minimum gate duration
increases with $d$. For each value of $d$, we used numerical experiments to determine a duration
$T_d$ such that at least two of the three simulation codes could find a solution with a small gate
infidelity. For JuQBox, we used the technique in Section~\ref{sec_time-step-est} with $C_P=80$ to
obtain the number of time steps. The number of control parameters follow from $D=2N_f D_1$, where
$N_f=d$ equals the number of carrier wave frequencies and $D_1$ is the number of B-splines per
control functions. Here, $D_1=10$ for $d=3,4,5$ and $D_1=20$ for $d=6$. For pulse\_optim and
Grape-TF, we calculate the number of time steps based on the shortest transition period,
corresponding to the highest transition frequency in the system. We then use 40 time steps per
shortest transition period to resolve the control functions. For both GRAPE methods there are 2
control parameters per time step. The main simulation parameters are given in
Table~\ref{tab_params}.
\begin{table}[tph]
    \begin{center}
\begin{tabular}{r|r||r|r||r|r}
\multicolumn{2}{c||}{} & \multicolumn{2}{c||}{\# time steps} & \multicolumn{2}{c}{\# parameters} \\ \hline
$d$ & $T_d$ [ns] & JuQBox & GRAPE & JuQBox & GRAPE\\ \hline
3   & 140        & 14,787  & 4,480 & 60 & 8,960 \\ \hline 
4   & 215        & 37,843  & 7,568 & 80  & 15,136 \\ \hline 
5   & 265        & 69,962  & 11,661 & 100 & 23,322 \\ \hline 
6   & 425        & 157,082 & 22,441 & 240 & 44,882 \\ \hline 
\end{tabular}
\caption{Gate duration, number of time steps ($M$) and total number of control parameters ($D$) in
  the $|0\rangle \leftrightarrow |d\rangle$ SWAP gate simulations. The number of time steps and
  control parameters are the same for pulse\_optim and Grape-TF.}\label{tab_params}
\end{center}
\end{table}

Optimization results for the pulse\_optim, Grape-TF and JuQBox codes are presented in
Tables~\ref{tab_qutip}, \ref{tab_grape} and~\ref{tab_juqbox}.
The pulse\_optim code generates piecewise constant control functions that are very noisy and may
therefore be hard to realize experimentally. To obtain a realistic estimate of the resulting
dynamics, we interpolate the optimized control functions on a grid with 20 times smaller time step
and use the \verb|mesolve()| function in QuTiP to calculate the evolution of the system from each
initial state. We then evaluate the gate infidelity using the evolved states at the final time,
denoted by ${\cal G}_1^*$ in Table~\ref{tab_qutip}. Since the control functions from Grape-TF and
JuQBox are significantly smoother, we report the target gate fidelities as calculated by those
codes.
\begin{table}[tph]
    \begin{center}
\begin{tabular}{r|r|r|r|r|r|r}
  $d$ & ${\cal G}_1^*$ & $|\psi^{(d+1)}|^2_\infty$ & $|p|_\infty$ [MHz]& $|q|_\infty$ [MHz] & \# iter & CPU [s]
  \\ \hline
  3 & 4.35e-6 & 9.41e-3 & 9.00 & 9.00 & 38 & 30 \\ \hline
  4 & 3.91e-5 & 1.20e-2   & 9.00 & 9.00 & 93 & 108 \\ \hline
  5 & 1.57e-4 & 8.77e-3   & 9.00 & 9.00 & 215 & 385 \\ \hline
  6 & {\bf 1.76e-3} & {\bf 4.48e-2} & 9.00 & 9.00 & 246 & 894 \\ \hline
\end{tabular}
\caption{QuTiP/pulse\_optim results for $|0\rangle\leftrightarrow |d\rangle$ SWAP gates. Note the
  larger infidelity and guard state population for $d=6$. }\label{tab_qutip}
    \end{center}
\end{table}
\begin{table}[thp]
    \begin{center}
\begin{tabular}{r|r|r|r|r|r|r}
  $d$ & ${\cal G}_1$ & $|\psi^{(d+1)}|^2_\infty$ &  $|p|_\infty$ [MHz] & $|q|_\infty$ [MHz] & \# iter & CPU [s]
  \\ \hline
  3 & 8.76e-6 & 4.03e-3 & 6.98  & 8.83 & 78 & 2,062 \\ \hline
  4 & 1.52e-5 & 3.39e-3 & 6.87 & 6.54 & 128 & 10,601  \\ \hline
%
  5 & 2.80e-5 & 1.78e-3 & 7.21 & 7.62 & 161 & 28,366 \\ \hline
  6 & {\bf 4.89e-1} & 2.33e-5 & 0.73 & 0.74 & 93  & 81,765 \\ \hline
\end{tabular}
\caption{Grape-TF results for $|0\rangle\leftrightarrow |d\rangle$ SWAP gates. Note the very large
  infidelity for $d=6$. These simulations used two NVIDIA P-100 GPUs to accelerate
  Tensorflow.}\label{tab_grape}
    \end{center}
\end{table}
\begin{table}[thp]
\begin{center}
\begin{tabular}{r|r|r|r|r|r|r}
  $d$ & ${\cal G}_1$ & $|\psi^{(d+1)}|^2_\infty$ & $|p|_\infty$ [MHz]& $|q|_\infty$ [MHz]& \# iter & CPU
  \\ \hline
  3 & 2.71e-5 & 1.92e-3 & 7.59 & 8.99 & 177 & 55 \\ \hline
  4 & 4.91e-5 & 1.23e-3 & 7.78 & 5.33 & 166 & 151 \\ \hline
  5 & 4.95e-5 & 1.25e-3 & 7.42 & 7.24 & 173 & 291 \\ \hline
  6 & 7.41e-6 & 4.41e-3 & 4.55 & 5.39 & 229 & 1255 \\ \hline
\end{tabular}
\caption{JuQBox results for $|0\rangle\leftrightarrow |d\rangle$ SWAP gates.}\label{tab_juqbox}
\end{center}
\end{table}

For the $|0\rangle \leftrightarrow |3\rangle$, $|0\rangle \leftrightarrow |4\rangle$ and $|0\rangle
\leftrightarrow |5\rangle$ SWAP gates, all three codes produce control functions with very small
gate infidelities. We note that the population of the guard level, $|\psi^{(d+1)}|^2$, is about an
order of magnitude larger with pulse\_optim than with JuQBox; the guard level population from
Grape-TF are somewhere in between. The most significant difference between the results occur for the
$d=6$ SWAP gate. Here, the Grape-TF code fails to produce a small gate infidelity after running for
almost 23 hours and the pulse\_optim code results in a gate fidelity that is about 2 orders of
magnitude larger than JuQBox.

While pulse\_optim and JuQBox require comparable amounts of CPU time to converge, the Grape-TF code
is between 50-100 times slower, despite the GPU acceleration.

We proceed by analyzing the optimized control functions and take the $|0\rangle \leftrightarrow
|5\rangle$ SWAP gate as a representative example. In this case, the relevant transition frequencies
in the laboratory frame of reference are
\begin{align}
f_k = \frac{1}{2\pi} \left( \omega_a - k \xi_a\right),\quad k=0,1,2,3,4.
\end{align}
To compare the optimized control functions, we evaluate the corresponding laboratory frame control
function using \eqref{eq_rot-ansatz} and study its Fourier spectrum. Results from the pulse\_optim,
Grape-TF and JuQBox simulations are presented in Figure~\ref{fig_ctrl-fft}. We first note that
pulse\_optim produces a significantly noisier control function compared to the other two codes. The
control function from Grape-TF is significantly smoother, even though its spectrum includes some
noticeable peaks at frequencies that do not correspond to transition frequencies in the system. The
JuQBox simulation results in a laboratory frame control function where each peak in the spectrum
corresponds to a transition frequency in the Hamiltonian.
\begin{figure}[htp]

  \begin{subfigure}{1.0\textwidth}
    \begin{center}
    \includegraphics[width=0.8\linewidth]{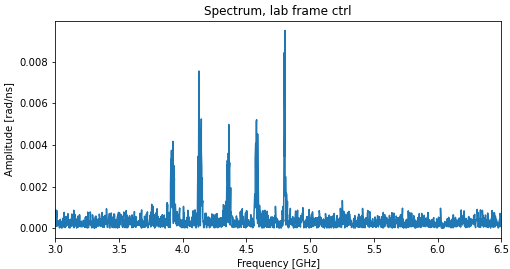}
    \caption{QuTiP/pulse\_optim.}
    \end{center}
\end{subfigure}
 
  \begin{subfigure}{1.0\textwidth}
    \begin{center}
  \vspace{3mm}
  \includegraphics[width=0.8\linewidth]{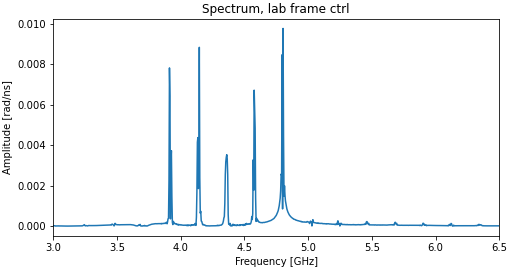}
  \caption{Grape-TF.}
     \end{center}
 \end{subfigure}
  
  \begin{subfigure}{1.0\textwidth}
   \begin{center}
   \vspace{3mm}  \includegraphics[width=0.8\linewidth]{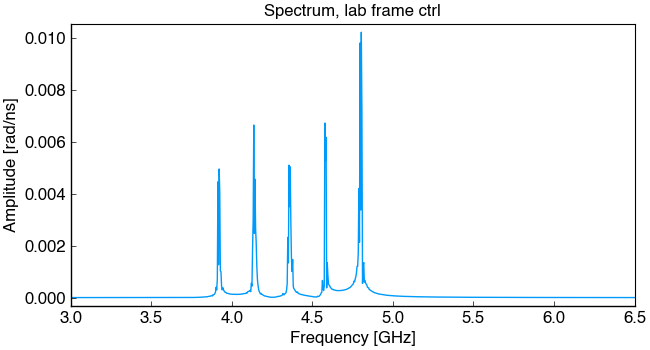}
    \caption{JuQBox.}
    \end{center}
  \end{subfigure}
\caption{Magnitude of the Fourier spectrum of the laboratory frame control function for the $|0\rangle \leftrightarrow |5\rangle$ SWAP gate.}\label{fig_ctrl-fft}    

\end{figure}
} 

\section{Conclusions}\label{sec_conc}

In this paper we have developed numerical methods for optimizing control functions for realizing
logical gates in a closed quantum system. {\changed The quantum state is governed by Schr\"odinger's equation,
which is a time-dependent Hamiltonian system. To ensure long-time numerical accuracy we discretize
it using the symplectic St\"ormer-Verlet method, which can be written as a partitioned Runge-Kutta
scheme.  Our main theoretical contribution is the derivation of a compatible time-discretization of
the adjoint state equation, such that the gradient of the discrete objective function can be
calculated exactly. This scheme generalizes Ober-Bl\"obaum's~\cite{ober2008discrete} methods to
the case of a time-dependent Hamiltonian system.}

We have also introduced a parameterization of the control functions based on B-splines with built-in
carrier waves. {\changed The carrier waves are used to specify the frequency spectra of the control functions,
while the B-spline functions specify their envelope and phase. This approach allows
the number of control parameters to be independent of, and significantly smaller than, the number of
time steps for integrating Schr\"odinger's equation.} Our numerical solution of the
optimal control problem is based on the general purpose interior-point optimization package
IPOPT~\cite{Wachter2006}, which implements a primal-dual barrier approach for minimizing the
objective function subject to amplitude constraints on the parameter vector. We optimized the
control functions for a CNOT gate with two guard states, resulting in a gate trace fidelity greater
than 0.9999. Having a moderate number of control parameters enabled us to study the spectrum of the
Hessian of the objective function at an optima. We found that imposing tighter bounds on the
parameter vector results in a Hessian with larger eigenvalues and thus improves the convergence of the
optimization algorithm.

{\changed The performance of the proposed algorithm, implemented in a code called JuQBox, was
  compared with two implementations of the GRAPE algorithm: QuTiP/pulse\_optim~\cite{qutip} and
  Grape-Tensorflow~\cite{Leung-2017}. JuQBox was found to produce significantly smoother control
  functions than QuTiP/pulse\_optim, while using about the same computational
  resources. JuQBox was also found to run about 50-100 times faster than Grape-Tensorflow.

  In future work, it would be interesting to study if the convergence properties of the optimization
  algorithm can be improved by modifying the objective function. We also intend to generalize
  our approach to solve optimal control problem for open quantum systems.}

\section*{Acknowledgment} 
We would like to thank Prof.~Daniel Appel\"o for bringing the St\"ormer-Verlet method to our attention.

This work was supported in part by LLNL laboratory directed research and development project
20-ERD-028 and in part by DOE office of advanced scientific computing research (OASCR) under the
Advanced Research in Quantum Computing (ARQC) program, award 2019-LLNL-SCW-1683.

This work performed under the auspices of the U.S. Department of Energy by Lawrence Livermore
National Laboratory under Contract DE-AC52-07NA27344. This is contribution LLNL-JRNL-800457.

\appendix

\section{The Hamiltonian in a rotating frame of reference}\label{app_RotatingFrame}

In the laboratory frame of reference, the Hamiltonian matrix for a single superconducting qudit can be modeled by
\begin{equation}\label{eq_quantum-osc-A}
H(t) = \omega_a a^\dag a - \frac{\xi_a}{2} a^\dag a^\dag a a + f(t)(a + a^\dagger).
\end{equation}
Here, $\omega_a>0$ and $\xi_a$ are given real constants and $f(t,\bm{\alpha})$ is a real-valued
function of time that depend on the parameter vector $\bm{\alpha}$.  Furthermore, $a$ is the
lowering matrix,
\[
a = \begin{bmatrix}
0 & 1 & & &  \\
& 0 & \sqrt{2} & &  \\
 & & \ddots & \ddots &\\
 & & & 0 & \sqrt{N-1}\\
 & & & & 0
\end{bmatrix},
\]
and the raising matrix $a^\dag$ is its adjoint (conjugate transpose).

To derive the rotating frame transformation, we consider the time-dependent change of variable
\[
\bm{\psi}(t) = R^{\dag}(t)\widetilde{\bm{\psi}}(t),\quad R^\dag R = I.
\]
We have
\[
\dot{\bm{\psi}} = \dot{R}^\dag \widetilde{\bm{\psi}} + R^\dag \dot{\widetilde{\bm{\psi}}},\quad
H\bm{\psi} = H R^\dag \widetilde{\bm{\psi}}.
\]
After some algebra, the Schr\"odinger equation \eqref{eq:schrodinger_example} and the identity $R \dot{R}^\dag = -
\dot{R} R^\dag$ gives:
\begin{equation}\label{eq_timedep_trans}
\dot{\widetilde{\bm{\psi}}} = -i \tilde{H}(t) \widetilde{\bm{\psi}},\quad \tilde{H}(t) = R(t)H(t)R(t)^\dag + i \dot{R}(t) R(t)^\dag.
\end{equation}
The rotating frame of reference is introduced by taking the unitary transformation to be
\begin{equation}\label{eq_rot-frame-trans}
R(t) = \exp(i \omega_a t\, a^\dag a ), \quad a^\dag a =
\begin{bmatrix}
  0 &&&& \\
  & 1 &&& \\
  && 2 &&  \\
  &&& \ddots & \\
  &&&& N-1
\end{bmatrix},\quad \dot{R}R^\dag = i\omega_a a^\dag a.
\end{equation}
From \eqref{eq_timedep_trans} and \eqref{eq_rot-frame-trans}, the first term in the Hamiltonian \eqref{eq_quantum-osc-A} is canceled by the term $i \dot{R}(t) R(t)^\dag$. Furthermore, $a^\dag a^\dag a a = (a^\dag a)^2 - a^\dag a$ and both $a^\dag a$ and $(a^\dag a)^2$ commute with $R(t)$. After noting that $R a^\dag R^\dag = e^{i\omega_a t} a^\dag$, the transformed Hamiltonian can be written
\begin{equation}\label{eq_trans-hamiltonian}
  \widetilde{H}(t) = -\frac{\xi_a}{2} \left((a^\dag a)^2 - a^\dag a \right) +
  f(t) \left( e^{-i\omega_a t} a + e^{i\omega_a t} a^\dag \right).
\end{equation}

To slow down the time scales in the control function, we want to absorb the highly oscillatory
factors $\exp(\pm i\omega_a t)$ into $f(t)$. Because the control function $f(t)$ is real-valued, this can only be done in an approximate fashion. We make the ansatz,
\begin{multline}
  f(t) = 2p(t) \cos(\omega_a t) - 2q(t) \sin(\omega_a t) = \\
  \left(p + i q\right)\exp(i\omega_a t) + \left( p - i q \right)
  \exp(-i\omega_a t),
\end{multline}
where $p(t)$ and $q(t)$ are real-valued functions. After some algebra, the transformed
Hamiltonian \eqref{eq_trans-hamiltonian} becomes
\begin{multline*}
  \widetilde{H}(t) 
  %
  = -\frac{\xi_a}{2} \left((a^\dag a)^2 - a^\dag a \right) + p\left( a + a^\dag \right) + i q \left( a - a^\dag \right)
  \\
  +  \left(p - iq\right) \exp(-2i\omega_a t) a  + \left(p + iq\right)\exp(2i\omega_a t) a^\dag.
\end{multline*}
The rotating frame approximation follows by ignoring the terms that oscillate with twice the
frequency, $\exp(\pm 2i\omega_a t)$, resulting in the transformed Schr\"odinger system,
\begin{align}\label{eq:rotschroedinger}
  \dot{\widetilde{\bm{\psi}}}_j &= - i\left(H_d + \widetilde{H}_c(t)\right)
  \widetilde{\bm{\psi}}_j,\quad
  \widetilde{\bm{\psi}}_j(0) = \bm{e}_j,\\
  H_d &= -\frac{\xi_a}{2} \left(a^\dag  a^\dag a a\right), \quad
  \widetilde{H}_c(t) =  p(t)\left( a + a^\dag \right) + i q(t) \left( a - a^\dag \right).
\end{align}
Here, $H_d$ is called the drift Hamiltonian. When $\xi_a \ll \omega_a$, the state vector varies on a
significantly slower time scale in the rotating frame than in the laboratory frame.

In the remainder of the paper, the Schr\"odinger equation is always solved under the rotating frame
approximation and we drop the tildes on the state vector and the Hamiltonian matrices.

\section{Derivation of the discrete adjoint scheme}\label{appendix:adjoint} 

We seek to determine a scheme for evolving the Lagrange multiplier (adjoint) variables to satisfy
the first order optimality conditions~\eqref{eq_backward}. In the following, let
$\delta_{r,s}$ denote the usual Kronecker delta function.

The terms $T^3_j$ to $T^6_j$ in \eqref{eq_disc-Lagrangian} enforce the relations between the stage
variables \eqref{eq_U1}-\eqref{eq_V2} according to
\begin{align}
T^3_j &= \sum_{n=0}^{M-1} \left\langle \bm{U}_j^{n,1} - \bm{u}_j^n, \bm{M}_j^{n,1} \right\rangle_2,\\
T^4_j &= \sum_{n=0}^{M-1} \left\langle \bm{U}_j^{n,2} - \bm{u}_j^n -
\frac{h}{2}\left(S_n\bm{U}_j^{n,1} + S_{n+1}\bm{U}_j^{n,2}  - K_n\bm{V}_j^{n,1} - K_{n+1}\bm{V}_j^{n,2} \right), \bm{M}_j^{n,2} \right\rangle_2,\\
T^5_j &= \sum_{n=0}^{M-1} \left\langle \bm{V}_j^{n,1} - \bm{v}_j^n - \frac{h}{2}\left(  K_{n+1/2}
\bm{U}_j^{n,1} + S_{n+1/2} \bm{V}_j^{n,1} \right), \bm{N}_j^{n,1} \right\rangle_2,\\
T^6_j &= \sum_{n=0}^{M-1} \left\langle \bm{V}_j^{n,2} - \bm{v}_j^n - \frac{h}{2}\left(  K_{n+1/2}
\bm{U}_j^{n,1} + S_{n+1/2} \bm{V}_j^{n,1} \right), \bm{N}_j^{n,2} \right\rangle_2.
\end{align}
Taking the derivative of \eqref{eq_disc-Lagrangian} with respect to $\bm{u}_j^r$
\begin{align*}
  0 = \pdiff{\Lag_h}{\bm{u}_j^r} & = \pdiff{{\cal J}_h}{\bm{u}_j^r}-\left[ (\bm{\mu}_j^n
    -\bm{\mu}_j^{n+1})\delta_{r,n} + \bm{\mu}_j^M\delta_{r,M} - (\bm{M}_j^{n,1} + \bm{M}_j^{n,2})
    \delta_{r,n}\right],   
\end{align*}
which gives the conditions
\begin{align*}
  \bm{\mu}_j^M = \pdiff{{\cal J}_h}{\bm{u}_j^M}, \quad \bm{\mu}_j^n - \bm{\mu}_j^{n+1} =
  \bm{M}_j^{n,1} + \bm{M}_j^{n,2}, \quad n=0,1,\ldots, M-1.
\end{align*}
Similarly, differentiating \eqref{eq_disc-Lagrangian} with respect to $\bm{v}_j^r$ gives
\begin{align*}
  0 = \pdiff{\Lag_h}{\bm{v}_j^r} & = \pdiff{{\cal J}_h}{\bm{v}_j^r}
  - \left[(\bm{\nu}_j^n -\bm{\nu}_j^{n+1})\delta_{r,n} + \bm{\nu}_j^M\delta_{r,M} - (\bm{N}_j^{n,1} + \bm{N}_j^{n,2}) \delta_{r,n}\right],
\end{align*}
which leads to the conditions
\begin{align*}
  \bm{\nu}_j^n - \bm{\nu}_j^{n+1}  = \bm{N}_j^{n,1} + \bm{N}_j^{n,2}, \quad \bm{\nu}_j^M  = \pdiff{{\cal J}_h}{\bm{v}_j^M}.
\end{align*}
Next we take the derivative of \eqref{eq_disc-Lagrangian} with respect to $\bm{U}_j^{n,1}$,
\begin{align*}
  \pdiff{\Lag_h}{\bm{U}_j^{n,1}} & = \pdiff{{\cal J}_h}{\bm{U}_j^{n,1}}-\sum_{i=1}^6 \pdiff{T_j^i}{\bm{U}_j^{n,1}} = 0, \\
  \pdiff{T_j^1}{\bm{U}_j^{n,1}} & = -\frac{h}{2}S_n^T\bm{\mu}_j^{n+1},\\
  \pdiff{T_j^2}{\bm{U}_j^{n,1}} & = -\frac{h}{2}K_{n+1/2}^T\bm{\nu}_j^{n+1}, \\
  \pdiff{T_j^3}{\bm{U}_j^{n,1}} & = \bm{M}_j^{n,1}, \\
  \pdiff{T_j^4}{\bm{U}_j^{n,1}} & = -\frac{h}{2}S_n^T\bm{M}_j^{n,2}, \\
  \pdiff{T_j^5}{\bm{U}_j^{n,1}} & = -\frac{h}{2}K_{n+1/2}^T \bm{N}_j^{n,1}, \\
  \pdiff{T_j^6}{\bm{U}_j^{n,1}} & = -\frac{h}{2}K_{n+1/2}^T \bm{N}_j^{n,2}, 
\end{align*}
which, using the fact that $S_n^T = -S_n$ and $K_n^T = K_n$, we may write as 
\begin{align*}
  \bm{M}_j^{n,1} + \frac{h}{2}S_n\left(\bm{\mu}_j^{n+1} + \bm{M}_j^{n,2}\right) - \frac{h}{2}
  K_{n+1/2}\left(\bm{\nu}_j^{n+1} + \bm{N}_j^{n,1} + \bm{N}_j^{n,2}\right) =
  \pdiff{{\cal J}_h}{\bm{U}_j^{n,1}}. 
\end{align*}
Repeating this procedure for the derivative with respect to $\bm{U}_j^{n,2}$ gives 
\begin{align*}
  \pdiff{\Lag_h}{\bm{U}_j^{n,2}} & = \pdiff{{\cal J}_h}{\bm{U}_j^{n,2}}-\sum_{i=1}^6 \pdiff{T_j^i}{\bm{U}_j^{n,2}} = 0, \\
  \pdiff{T_j^1}{\bm{U}_j^{n,2}} & = -\frac{h}{2}S_{n+1}^T\bm{\mu}_j^{n+1},\\
  \pdiff{T_j^2}{\bm{U}_j^{n,2}} & = -\frac{h}{2}K_{n+1/2}^T\bm{\nu}_j^{n+1}, \\
  \pdiff{T_j^4}{\bm{U}_j^{n,2}} & = \bm{M}_j^{n,2} -\frac{h}{2}S_{n+1}^T\bm{M}_j^{n,2}, \\
  \pdiff{T_j^3}{\bm{U}_j^{n,2}} & = \pdiff{T_j^5}{\bm{U}_j^{n,2}} = \pdiff{T_j^6}{\bm{U}_j^{n,2}} = 0,
\end{align*}
which we may write compactly as 
\begin{align*}
  \bm{M}_j^{n,2} + \frac{h}{2}S_{n+1}\left(\bm{\mu}_j^{n+1} + \bm{M}_j^{n,2}\right) - \frac{h}{2} K_{n+1/2}\bm{\nu}_j^{n+1} = \pdiff{{\cal J}_h}{\bm{U}_j^{n,2}}.
\end{align*}
Taking the derivative of \eqref{eq_disc-Lagrangian} with respect to $\bm{V}_j^{n,1}$ gives the set of equations
\begin{align*}
  \pdiff{\Lag_h}{\bm{V}_j^{n,1}} & = \pdiff{{\cal J}_h}{\bm{V}_j^{n,1}}-\sum_{i=1}^6 \pdiff{T_j^i}{\bm{V}_j^{n,1}} = 0, \\
  \pdiff{T_j^1}{\bm{V}_j^{n,1}} & = \frac{h}{2}K_n^T \bm{\mu}_j^{n+1},\\
  \pdiff{T_j^2}{\bm{V}_j^{n,1}} & = -\frac{h}{2}S^T_{n+1/2} \bm{\nu}_j^{n+1}, \\
  \pdiff{T_j^3}{\bm{V}_j^{n,1}} & = 0, \\
  \pdiff{T_j^4}{\bm{V}_j^{n,1}} & = \frac{h}{2}K_n^T \bm{M}_j^{n,2}, \\
  \pdiff{T_j^5}{\bm{V}_j^{n,1}} & = \bm{N}_j^{n,1} -\frac{h}{2}S_{n+1/2}^T \bm{N}_j^{n,1} , \\
  \pdiff{T_j^6}{\bm{V}_j^{n,1}} & = -\frac{h}{2}S_{n+1/2}^T \bm{N}_j^{n,2}, 
\end{align*}
which gives the condition
\begin{align*}
  \bm{N}_j^{n,1} + \frac{h}{2}S_{n+1/2}\left(\bm{\nu}_j^{n+1} + \bm{N}_j^{n,1} + \bm{N}_j^{n,2}\right) + \frac{h}{2}K_{n}\left(\bm{\mu}_j^{n+1} + \bm{M}_j^{n,2}\right) = \pdiff{{\cal J}_h}{\bm{V}_j^{n,1}}.
\end{align*}
Similarly, taking the derivative with respect to $\bm{V}_j^{n,2}$ gives 
\begin{align*}
  \pdiff{\Lag_h}{\bm{V}_j^{n,2}} & = \pdiff{{\cal J}_h}{\bm{V}_j^{n,2}}-\sum_{i=1}^6 \pdiff{T_j^i}{\bm{V}_j^{n,2}} = 0, \\
  \pdiff{T_j^1}{\bm{V}_j^{n,2}} & = \frac{h}{2}K_{n+1}^T\bm{\mu}_j^{n+1},\\
  \pdiff{T_j^2}{\bm{V}_j^{n,2}} & = -\frac{h}{2}S^T_{n+1/2} \bm{\nu}_j^{n+1}, \\
  \pdiff{T_j^4}{\bm{V}_j^{n,2}} & = \frac{h}{2}K_{n+1}^T\bm{M}_j^{n,2}, \\
  \pdiff{T_j^6}{\bm{V}_j^{n,2}} & = \bm{N}_j^{n,2}, \\
  \pdiff{T_j^3}{\bm{V}_j^{n,2}} & = \pdiff{T_j^5}{\bm{V}_j^{n,2}} = 0,
\end{align*}
giving
\begin{align*}
  \bm{N}_j^{n,2} + \frac{h}{2}S_{n+1/2}\bm{\nu}_j^{n+1} + \frac{h}{2}K_{n+1}\left(\bm{\mu}_j^{n+1} + \bm{M}_j^{n,2}\right) = \pdiff{{\cal J}_h}{\bm{V}_j^{n,2}}.
\end{align*}
In summary, the first order optimality conditions~\eqref{eq_backward} are satisfied if the following
equations hold:
\begin{gather}
  \bm{\mu}_j^n - \bm{\mu}_j^{n+1} = \bm{M}_j^{n,1} + \bm{M}_j^{n,2}, \quad \bm{\mu}_j^M =
  \pdiff{{\cal J}_h}{\bm{u}_j^M}, \label{eq:optimPreSub1}\\
  \bm{\nu}_j^n - \bm{\nu}_j^{n+1} = \bm{N}_j^{n,1} + \bm{N}_j^{n,2}, \quad \bm{\nu}_j^M  =
  \pdiff{{\cal J}_h}{\bm{v}_j^M}, \label{eq:optimPreSub2}\\
  \bm{M}_j^{n,1} + \frac{h}{2}S_n\left(\bm{\mu}_j^{n+1} + \bm{M}_j^{n,2}\right) - \frac{h}{2}
  K_{n+1/2}\left(\bm{\nu}_j^{n+1} + \bm{N}_j^{n,1} + \bm{N}_j^{n,2}\right) = \pdiff{{\cal
      J}_h}{\bm{U}_j^{n,1}}, \label{eq:optimPreSub3}\\
  \bm{M}_j^{n,2} + \frac{h}{2}S_{n+1}\left(\bm{\mu}_j^{n+1} + \bm{M}_j^{n,2}\right) - \frac{h}{2}
  K_{n+1/2}\bm{\nu}_j^{n+1} = \pdiff{{\cal J}_h}{\bm{U}_j^{n,2}}, \label{eq:optimPreSub4}\\
  \bm{N}_j^{n,1} + \frac{h}{2}S_{n+1/2}\left(\bm{\nu}_j^{n+1} + \bm{N}_j^{n,1} +
  \bm{N}_j^{n,2}\right) + \frac{h}{2}K_{n}\left(\bm{\mu}_j^{n+1} + \bm{M}_j^{n,2}\right) =
  \pdiff{{\cal J}_h}{\bm{V}_j^{n,1}},\label{eq:optimPreSub5}\\
  \bm{N}_j^{n,2} + \frac{h}{2}S_{n+1/2}\bm{\nu}_j^{n+1} + \frac{h}{2}K_{n+1}\left(\bm{\mu}_j^{n+1} +
  \bm{M}_j^{n,2}\right) = \pdiff{{\cal J}_h}{\bm{V}_j^{n,2}}.\label{eq:optimPreSub6} 
\end{gather}
We now consider the following change of variables
\begin{align}
  \bm{X}_j^{n} & = \bm{\mu}_j^{n+1} + \bm{M}_j^{n,2}, \label{eq:ChangeOfVar1}\\ 
  \bm{Y}_j^{n,1} & = \bm{\nu}_j^{n+1} + \bm{N}_j^{n,1} + \bm{N}_j^{n,2}, \label{eq:ChangeOfVar2}\\ 
  \bm{Y}_j^{n,2} & = \bm{\nu}_j^{n+1},\label{eq:ChangeOfVar3}
\end{align}
which, upon substitution into \eqref{eq:optimPreSub3}-\eqref{eq:optimPreSub6}, gives the set of equations
\begin{gather}
  \bm{M}_j^{n,1} + \frac{h}{2}S_n\bm{X}_j^{n} - \frac{h}{2} K_{n+1/2}\bm{Y}_j^{n,1} = \pdiff{{\cal
      J}_h}{\bm{U}_j^{n,1}}, \label{eq:optimPostSub3}\\
  \bm{M}_j^{n,2} + \frac{h}{2}S_{n+1}\bm{X}_j^{n} - \frac{h}{2} K_{n+1/2}\bm{Y}_j^{n,2} =
  \pdiff{{\cal J}_h}{\bm{U}_j^{n,2}}, \label{eq:optimPostSub4}\\
  \bm{N}_j^{n,1} + \frac{h}{2}S_{n+1/2}\bm{Y}_j^{n,1} + \frac{h}{2}K_{n}\bm{X}_j^{n} = \pdiff{{\cal
      J}_h}{\bm{V}_j^{n,1}},\label{eq:optimPostSub5}\\
  \bm{N}_j^{n,2} + \frac{h}{2}S_{n+1/2}\bm{Y}_j^{n,2} + \frac{h}{2}K_{n+1}\bm{X}_j^{n} =
  \pdiff{{\cal J}_h}{\bm{V}_j^{n,2}}. \label{eq:optimPostSub6}
\end{gather}

By adding \eqref{eq:optimPostSub3}-\eqref{eq:optimPostSub4},
\begin{align}
  \bm{M}_j^{n,1} + \bm{M}_j^{n,2}  = -\frac{h}{2}\left[\left(S_n + S_{n+1}\right)\bm{X}_j^{n} - K_{n+1/2}\left(\bm{Y}_j^{n,1} + \bm{Y}_j^{n,2} \right)\right]+ \pdiff{{\cal J}_h}{\bm{U}_j^{n,1}} + \pdiff{{\cal J}_h}{\bm{U}_j^{n,2}} \label{eq:sub1}.
\end{align}
Similarly, by adding \eqref{eq:optimPostSub5}-\eqref{eq:optimPostSub6},
\begin{align}
  \bm{N}_j^{n,1} + \bm{N}_j^{n,2}  = -\frac{h}{2}\left[S_{n+1/2}\left(\bm{Y}_j^{n,1} + \bm{Y}_j^{n,2} \right) + \left(K_n + K_{n+1}\right)\bm{X}_j^{n} \right]+ \pdiff{{\cal J}_h}{\bm{V}_j^{n,1}} + \pdiff{{\cal J}_h}{\bm{V}_j^{n,2}} \label{eq:sub2}.
\end{align}

Thus, \eqref{eq:optimPreSub1}-\eqref{eq:optimPreSub2} can be rewritten as 
\begin{align}
  \bm{\mu}_j^n - \bm{\mu}_j^{n+1} & = -\frac{h}{2}\left[\left(S_n + S_{n+1}\right)\bm{X}_j^{n} -
    K_{n+1/2}\left(\bm{Y}_j^{n,1} + \bm{Y}_j^{n,2} \right)\right]+ \pdiff{{\cal
      J}_h}{\bm{U}_j^{n,1}} + \pdiff{{\cal J}_h}{\bm{U}_j^{n,2}} 
  %
  \label{eq:optimPostSub1} \\
  \bm{\nu}_j^n - \bm{\nu}_j^{n+1} & = -\frac{h}{2}\left[S_{n+1/2}\left(\bm{Y}_j^{n,1} +
    \bm{Y}_j^{n,2} \right) + \left(K_n + K_{n+1}\right)\bm{X}_j^{n} \right]+ \pdiff{{\cal
      J}_h}{\bm{V}_j^{n,1}} + \pdiff{{\cal J}_h}{\bm{V}_j^{n,2}} 
  %
  \label{eq:optimPostSub2}
\end{align}
%
By combining $\bm{X}_j^{n} = \bm{\mu}_j^{n+1} + \bm{M}_j^{n,2}$ and \eqref{eq:optimPostSub4},
\begin{align}
  \bm{X}_j^n = \bm{\mu}_j^{n+1}  - \frac{h}{2}S_{n+1}\bm{X}_j^{n} + \frac{h}{2}
  K_{n+1/2}\bm{Y}_j^{n,2} + \pdiff{{\cal J}_h}{\bm{U}_j^{n,2}}.\label{eq_Xn-update}
\end{align}
Similarly, by combining $\bm{Y}_j^{n,1} = \bm{\nu}_j^{n+1} + \bm{N}_j^{n,1} + \bm{N}_j^{n,2}$ and \eqref{eq:sub2},
\begin{align}
  \bm{Y}_j^{n,1} = \bm{\nu}_j^{n+1} -\frac{h}{2}\left[S_{n+1/2}\left(\bm{Y}_j^{n,1} + \bm{Y}_j^{n,2}
    \right) + \left(K_n + K_{n+1}\right)\bm{X}_j^{n} \right]+ \pdiff{{\cal J}_h}{\bm{V}_j^{n,1}} +
  \pdiff{{\cal J}_h}{\bm{V}_j^{n,2}}.\label{eq_Yn1-update}
\end{align}
The time-stepping scheme is completed by the relation
\begin{equation}
  \bm{Y}^{n,2}=\bm{\nu}^{n+1}_j. \label{eq_Yn2-update}
\end{equation}

The scheme \eqref{eq:optimPostSub1}-\eqref{eq_Yn2-update} may be
written in the form of Lemma~\ref{lem_adjoint-bck} by defining the slopes according to \eqref{eq_kappa1}-\eqref{eq_ell2}.
This completes the proof of the lemma.




\section{Proof of Corollary 1}\label{app_cor1}
By rearranging \eqref{eq_mu} and \eqref{eq_nu},
\begin{align}
  \bm{\mu}_j^{n+1} &= \bm{\mu}_j^n + \frac{h}{2} \left(\bm{\kappa}_j^{n,1} + \bm{\kappa}_j^{n,2}\right),\label{eq_mu-forwards}\\
  \bm{\nu}_j^{n+1} &= \bm{\nu}_j^{n} + \frac{h}{2} \left( \bm{\ell}_j^{n,1} + \bm{\ell}_j^{n,2} \right).\label{eq_nu-forwards}
\end{align}
Hence, $b^\mu_1=b^\mu_2 = 1/2$ and $b^\nu_1=b^\nu_2 = 1/2$.

To express the stage variables in standard form we substitute \eqref{eq_mu-forwards} into \eqref{eq_adjoint-stageX} and define $\bm{X}_j^{n,1}=\bm{X}_j^{n,2}=\bm{X}_j^{n}$. Similarly, we substitute \eqref{eq_nu-forwards} into \eqref{eq_Y1} and \eqref{eq_Y2}, resulting in
\begin{align*}
  \bm{X}_j^{n,1} &= \bm{\mu}_j^n + \frac{h}{2} \bm{\kappa}_j^{n,1}, \\
  \bm{X}_j^{n,2} &= \bm{\mu}_j^n + \frac{h}{2} \bm{\kappa}_j^{n,1}, \\
  \bm{Y}_j^{n,1} &= \bm{\nu}_j^{n},\\
  \bm{Y}_j^{n,2} &= \bm{\nu}_j^{n} + \frac{h}{2} \left( \bm{\ell}_j^{n,1} + \bm{\ell}_j^{n,2} \right).
\end{align*}
From these relations we can identify $a^\mu_{11} = a^\mu_{21} = 1/2$ and $a^\mu_{12} = a^\mu_{22} = 0$. Furthermore, $a^\nu_{11} = a^\nu_{12} = 0$ and $a^\nu_{21} = a^\nu_{22} = 1/2$.

For the case without forcing, the formulae for the slopes,
\eqref{eq_adjoint-stageX}-\eqref{eq_Y2}, become
\begin{align}
\bm{\kappa}_j^{n,1} &= S_n \bm{X}_j^{n,1} - K_{n+1/2} \bm{Y}_j^{n,1},\label{eq_kappa1-nf}\\
\bm{\kappa}_j^{n,2} &= S_{n+1} \bm{X}_j^{n,2} - K_{n+1/2} \bm{Y}_j^{n,2},\\
\bm{\ell}_j^{n,1} &= K_{n} \bm{X}_j^{n,1} + S_{n+1/2} \bm{Y}_j^{n,1},\\
\bm{\ell}_j^{n,2} &= K_{n+1} \bm{X}_j^{n,2} + S_{n+1/2} \bm{Y}_j^{n,2}.\label{eq_ell2-nf}
\end{align}
They are consistent approximations of the time derivatives
$\dot{\bm{\mu}}(t_n)$ and $\dot{\bm{\nu}}(t_n)$, respectively. The scheme is therefore a consistent
approximation of the continuous adjoint system.

\section{Computing the gradient of the discrete objective function}\label{appendix:adjointGrad}
Given a solution that satisfies the saddle point conditions of \eqref{eq_forward} and \eqref{eq_backward}, the
gradient of ${\cal L}_h(\bm{\alpha})$ satisfies
\[
\frac{d{\cal L}_h}{d\alpha_r} = \frac{\p{\cal J}_{1h}}{\p \alpha_r}(\bm{u}, \bm{v}) + 
\frac{\p{\cal J}_{2h}}{\p \alpha_r}(\bm{U}, \bm{V}) ,\quad r=1,2,\ldots,D.
\]
The gradient of ${\cal L}_h$ with respect to $\bm{\alpha}$ only gets a contribution from the terms
in $T_j^q$ that involve the matrices $K$ and $S$. Let $S'_n = \p S/\p \alpha_r ( t_n)$ and $K'_n =
\p K/\p \alpha_r ( t_n)$. We have,
\begin{align*}
\frac{\p T^1_j}{\p \alpha_r} &=  - \frac{h}{2} \sum_{n=0}^{M-1} \left\langle S'_n\bm{U}_j^{n,1} -
K'_n\bm{V}_j^{n,1} + S'_{n+1}\bm{U}_j^{n,2} - K'_{n+1}\bm{V}_j^{n,2}, \bm{\mu}_j^{n+1} \right\rangle_2,\\ 
\frac{\p T^2_j}{\p \alpha_r} &= - \frac{h}{2} \sum_{n=0}^{M-1} \left\langle K'_{n+1/2} \left(\bm{U}_j^{n,1} + \bm{U}_j^{n,2}\right) + S'_{n+1/2}\left(
  \bm{V}_j^{n,1} +  \bm{V}_j^{n,2}\right), \bm{\nu}_j^{n+1} \right\rangle_2,\\
\frac{\p T^3_j}{\p \alpha_r} &= 0,\\
\frac{\p T^4_j}{\p \alpha_r} &= - \frac{h}{2} \sum_{n=0}^{M-1} \left\langle S'_n\bm{U}_j^{n,1} - K'_n\bm{V}_j^{n,1} + S'_{n+1}\bm{U}_j^{n,2} -
  K'_{n+1}\bm{V}_j^{n,2} , \bm{M}_j^{n,2} \right\rangle_2,\\
\frac{\p T^5_j}{\p \alpha_r} &= - \frac{h}{2} \sum_{n=0}^{M-1} \left\langle   K'_{n+1/2} \bm{U}_j^{n,1} + S'_{n+1/2} \bm{V}_j^{n,1}, \bm{N}_j^{n,1} \right\rangle_2,\\
\frac{\p T^6_j}{\p \alpha_r} &= - \frac{h}{2} \sum_{n=0}^{M-1} \left\langle   K'_{n+1/2} \bm{U}_j^{n,1} + S'_{n+1/2} \bm{V}_j^{n,1}, \bm{N}_j^{n,2} \right\rangle_2.
\end{align*}
We note that
\[
  \frac{\p (T_j^5+T_j^6)}{\p \alpha_r} =
  - \frac{h}{2} \sum_{n=0}^{M-1} \left\langle K'_{n+1/2} \bm{U}_j^{n,1} + S'_{n+1/2} \bm{V}_j^{n,1} , \bm{N}_j^{n,1} + \bm{N}_j^{n,2} \right\rangle_2.
\]
Let $\bm{X}_j^n$ and $\bm{Y}_j^{n,i}$ be defined by
\eqref{eq:ChangeOfVar1}-\eqref{eq:ChangeOfVar3}. We have,
\begin{align*}
\frac{\p T_j^4}{\p \alpha_r} &= - \frac{h}{2} \sum_{n=0}^{M-1} \left\langle S'_n\bm{U}_j^{n,1} - K'_n\bm{V}_j^{n,1} + S'_{n+1}\bm{U}_j^{n,2} -
K'_{n+1}\bm{V}_j^{n,2}, \bm{X}_j^{n} - \bm{\mu}_j^{n+1} \right\rangle_2,\\
  \frac{\p (T_j^5 + T_j^6)}{\p \alpha_r} &=
  - \frac{h}{2} \sum_{n=0}^{M-1} \left\langle  K'_{n+1/2} \bm{U}_j^{n,1} + S'_{n+1/2} \bm{V}_j^{n,1}, \bm{Y}_j^{n,1} - \bm{\nu}_j^{n+1} \right\rangle_2.
\end{align*}
Thus,
\[
  \frac{\p (T_j^1 + T_j^4)}{\p \alpha_r} = - \frac{h}{2} \sum_{n=0}^{M-1} \left\langle S'_n\bm{U}_j^{n,1} - K'_n\bm{V}_j^{n,1} + S'_{n+1}\bm{U}_j^{n,2} -
  K'_{n+1}\bm{V}_j^{n,2}, \bm{X}_j^{n} \right\rangle_2.
\]
Furthermore, from the relation \eqref{eq:ChangeOfVar3},
\begin{align*}
  \frac{\p (T_j^2 + T_j^5 + T_j^6)}{\p \alpha_r} = &- \frac{h}{2} \sum_{n=0}^{M-1} \left\langle
  K'_{n+1/2} \bm{U}_j^{n,1} + S'_{n+1/2} \bm{V}_j^{n,1}, \bm{Y}_j^{n,1}\right\rangle_2 \\ 
  &- \frac{h}{2} \sum_{n=0}^{M-1} \left\langle K'_{n+1/2} \bm{U}_j^{n,2} + S'_{n+1/2} \bm{V}_j^{n,2},\bm{Y}_j^{n,2} \right\rangle_2,
\end{align*}
We can further simplify the expressions by recognizing that $\bm{V}^{n,1}=\bm{V}^{n,2}$. By collecting the terms,
\begin{multline*}
  \frac{\p{\cal L}_h}{\p \alpha_r} =
    %
  \frac{h}{2} \sum_{j=0}^{E-1} \sum_{n=0}^{M-1} \left(  \left\langle S'_n\bm{U}_j^{n,1} + 
  S'_{n+1}\bm{U}_j^{n,2} - (K'_n + K'_{n+1}) \bm{V}_j^{n,1}, \bm{X}_j^{n}
  \right\rangle_2\, \right.\\
  + \left\langle K'_{n+1/2} \bm{U}_j^{n,1} + S'_{n+1/2} \bm{V}_j^{n,1},
  \bm{Y}_j^{n,1}\right\rangle_2\\
  \left. +\left\langle K'_{n+1/2} \bm{U}_j^{n,2} + S'_{n+1/2} \bm{V}_j^{n,1},\bm{Y}_j^{n,2}
  \right\rangle_2  \right).
\end{multline*}
This completes the proof of the lemma.

\bibliographystyle{plain}
\bibliography{quantum}
\end{document}